\tikzset{
  initial text=$ $ 
}
\let\IDeclareAcronym\DeclareAcronym
\renewcommand{\DeclareAcronym}[2]{%
 \IDeclareAcronym{#1}{%
  #2,foreign-plural={} 
  }
}
\DeclareAcronym{aa}{
  short = AA,
  long = Acceptance Accuracy,
  long-indefinite = an
}
\DeclareAcronym{adas}{
  short = ADAS,
  long = Advanced Driver-Assistance Systems,
  short-indefinite = an,
  long-indefinite = an
}
\DeclareAcronym{acc}{
  short = ACC,
  long = Adaptive Cruise Control,
  short-indefinite = an,
  long-indefinite = an
}
\DeclareAcronym{adc}{
  short = ADC,
  long  = Analog to Digital Converter,
  short-indefinite = an,
  long-indefinite = an
}
\DeclareAcronym{afr}{
  short = \AFR{},
  long = Always Finitely Refutable,
  short-indefinite = an,
  long-indefinite = an
}
\DeclareAcronym{afs}{
  short = \AFS{},
  long = Always Finitely Satisfiable,
  short-indefinite = an,
  long-indefinite = an
}
\DeclareAcronym{ai}{
  short = AI,
  long  = artificial intelligence
}
\DeclareAcronym{ais}{
  short = AIS,
  long = Active Instance Stack,
  long-indefinite = an
}
\DeclareAcronym{aml}{
  short = AML,
  long  = the ASPEN Modeling Language
}
\DeclareAcronym{anml}{
  short = ANML,
  long  = the Action Notation Modeling Language
}
\DeclareAcronym{anmlite}{
  short = ANMLite,
  long  = the Action Notation Modeling Language Light
}
\DeclareAcronym{api}{
  short = API,
  long  = Application Programming Interface,
  short-indefinite = an,
  long-indefinite = an
}
\DeclareAcronym{atam}{
  short = ATAM,
  long  = Architecture Tradeoff Analysis Method,
  short-indefinite = an,
  long-indefinite = an
}
\DeclareAcronym{atl}{
  short = ATL,
  long = Allen's Temporal Logic,
  short-indefinite = an,
  long-indefinite = an
}
\DeclareAcronym{bide}{
  short = BIDE,
  long = BI-Directional Extension
}
\DeclareAcronym{bji}{
  short = BJI,
  long = between-job inter-arrival time
}
\DeclareAcronym{can}{
  short = CAN,
  long  = Controller Area Network
}
\DeclareAcronym{capec}{
  short = CAPEC,
  long  = Common Attack Pattern Enumeration and Classification
}
\DeclareAcronym{cep}{
  short = CEP,
  long  = Complex Event Processing
}
\DeclareAcronym{cfc}{
  short = CFC,
  long = Computation of Fixed-point Complexity
}
\DeclareAcronym{cfsm}{
  short = CFSM,
  long = communicating finite state machine
}
\DeclareAcronym{cminwin}{
  short = CMINWIN,
  long  = Number of Minimal Windows of Occurrence
}
\DeclareAcronym{cpu}{
  short = CPU,
  long  = Central Processing Unit
}
\DeclareAcronym{csp}{
  short = CSP,
  long = Communicating Sequential Processes
}
\DeclareAcronym{csv}{
  short = CSV,
  long  = Comma Separated Value
}
\DeclareAcronym{dfa}{
  short = DFA,
  long = Deterministic Finite Automaton,
  long-plural-form = Deterministic Finite Automata
}
\DeclareAcronym{dft}{
  short = DFT,
  long  = Discrete Fourier Transform
}
\DeclareAcronym{dsl}{
  short = DSL,
  long = domain-specific language
}
\DeclareAcronym{dsms}{
  short = DSMS,
  long  = Data Stream Management System
}
\DeclareAcronym{dut}{
  short = DUT,
  long = device-under-test
}
\DeclareAcronym{ecu}{
  short = ECU,
  long  = Electronic Control Unit,
  short-indefinite = an,
  long-indefinite = an
}
\DeclareAcronym{edf}{
  short = EDF,
  long = Earliest Deadline First
}
\DeclareAcronym{egads}{
  short = EGADS,
  long  = Extendible and Generic Anomaly Detection System,
  short-indefinite = an,
  long-indefinite = an
}
\DeclareAcronym{epl}{
  short = EPL,
  long = Event Processing Language,
  short-indefinite = an,
  long-indefinite = an
}
\DeclareAcronym{esl}{
  short = ESL,
  long  = Expressive Stream Language,
  short-indefinite = an,
  long-indefinite = an
}
\DeclareAcronym{esql}{
  short = ESQL,
  long  = Event Stream Query Language,
  short-indefinite = an,
  long-indefinite = an
}
\DeclareAcronym{etm}{
  short = ETM,
  long = Embedded Trace Macrocell,
  short-indefinite = an,
  long-indefinite = an
}
\DeclareAcronym{evr}{
  short = EVR,
  long  = EVent Report
}
\DeclareAcronym{fa}{
  short = FA,
  long = Finite Automaton,
  long-plural-form = Finite Automata,
  short-indefinite = an
}
\DeclareAcronym{fcda}{
  short = FCDA,
  long = Forward Collision Detection and Avoidance,
  short-indefinite = an
}
\DeclareAcronym{fifo}{
  short = FIFO,
  long = first-in\, first-out
}
\DeclareAcronym{fltl}{
  short = FLTL,
  long  = \ac{ltl} on Finite Traces,
  short-indefinite = an
}
\DeclareAcronym{fr}{
  short = FR,
  long = Finitely Refutable,
  short-indefinite = an
}
\DeclareAcronym{fs}{
  short = FS,
  long = Finitely Satisfiable,
  short-indefinite = an
}
\DeclareAcronym{fsm}{
  short = FSM,
  long = finite state machine,
  short-indefinite = an
}
\DeclareAcronym{gcc}{
  short = GCC,
  long = the GNU Compiler Collection
}
\DeclareAcronym{gil}{
  short = GIL,
  long = Graphical Interval Logic
}
\DeclareAcronym{gpl}{
  short = GPLv3,
  long  = Gnu Public License version 3
}
\DeclareAcronym{gps}{
  short = GPS,
  long  = Global Positioning System
}
\DeclareAcronym{gui}{
  short = GUI,
  long  = Graphical User Interface
}
\DeclareAcronym{gsql}{
  short = GSQL,
  long  = Gigascope Query Language
}
\DeclareAcronym{hdlc}{
  short = HDLC,
  long = High-level Data Link Control
}
\DeclareAcronym{hmm}{
  short = HMM,
  long = Hidden Markov Model
}
\DeclareAcronym{hpc}{
  short = HPC,
  long  = High Performance Computing,
  short-indefinite = an
}
\DeclareAcronym{hs}{
  short = HS,
  long  = Halpern and Shoham's modal logic of intervals,
  short-indefinite = an
}
\DeclareAcronym{ids}{
  short = IDS,
  long  = Intrusion Detection System,
  short-indefinite = an,
  long-indefinite = an
}
\DeclareAcronym{ifp}{
  short = IFP,
  long  = Information Flow Processing,
  short-indefinite = an,
  long-indefinite = an
}
\DeclareAcronym{iji}{
  short = IJI,
  long = intra-job inter-arrival time,
  short-indefinite = an,
  long-indefinite = an
}
\DeclareAcronym{imu}{
  short = IMU,
  long  = Inertial Measurement Unit,
  short-indefinite = an,
  long-indefinite = an
}
\DeclareAcronym{ip}{
  short = IP,
  long  = Internet Protocol,
  short-indefinite = an,
  long-indefinite = an
}
\DeclareAcronym{isr}{
  short = ISR,
  long = interrupt service routine,
  short-indefinite = an,
  long-indefinite = an
}
\DeclareAcronym{itl}{
  short = ITL,
  long = Interval Temporal Logic,
  short-indefinite = an,
  long-indefinite = an
}
\DeclareAcronym{jit}{
  short = JIT,
  long  = Just-In-Time
}
\DeclareAcronym{jpl}{
  short = JPL,
  long  = Jet Propulsion Laboratory
}
\DeclareAcronym{json}{
  short = JSON,
  long  = JavaScript Object Notation
}
\DeclareAcronym{jvm}{
  short = JVM,
  long  = Java Virtual Machine
}
\DeclareAcronym{knn}{
  short = k-NN,
  long  = k-Nearest Neighbors
}
\DeclareAcronym{lars}{
  short = LARS,
  long = Laboratory for Reliable Software
}
\DeclareAcronym{lanl}{
  short = LANL,
  long  = the Los Alamos National Laboratory
}
\DeclareAcronym{lansce}{
  short = LANSCE,
  long  = Los Alamos Neutron Science Center
}
\DeclareAcronym{lcfsm}{
  short = LCFSM,
  long = lossy communicating finite state machine,
  short-indefinite = an
}
\DeclareAcronym{ldw}{
  short = LDW,
  long = Lane Departure Warning,
  short-indefinite = an
}
\DeclareAcronym{lka}{
  short = LKA,
  long = Lane Keeping Assistance,
  short-indefinite = an
}
\DeclareAcronym{loda}{
  short = LODA,
  long  = Lightweight Online Detector of Anomalies,
  short-indefinite = an,
  long-indefinite = an
}
\DeclareAcronym{ltl}{
  short = LTL,
  long  = Linear Temporal Logic,
  short-indefinite = an
}
\DeclareAcronym{ltl3}{
  short = LTL$_3$,
  long  = Three-value \ac{ltl},
  short-indefinite = an
}
\DeclareAcronym{ltl4}{
  short = LTL$_4$,
  long  = four-value \ac{ltl},
  short-indefinite = an
}
\DeclareAcronym{mcu}{
  short = MCU,
  long  = Micro-controller Unit,
  short-indefinite = an
}
\DeclareAcronym{milprit}{
  short = MILPRIT,
  long = Mining Interval Logic Patterns with Regular expressIons consTraints
}
\DeclareAcronym{msl}{
  short = MSL,
  long = Mars Science Laboratory,
  short-indefinite = an
}
\DeclareAcronym{mtl}{
  short = MTL,
  long = Metric Temporal Logic,
  short-indefinite = an
}
\DeclareAcronym{nasa}{
  short = NASA,
  long  = the National Aeronautics and Space Administration
}
\DeclareAcronym{nba}{
  short = NBA,
  long = Non-deterministic B\"uchi Automaton,
  long-plural-form = Non-deterministic B\"uchi Automata,
  short-indefinite = an
}
\DeclareAcronym{nddl}{
  short = NDDL,
  long  = the EUROPA Modeling Language
}
\DeclareAcronym{nfa}{
  short = NFA,
  long = Non-deterministic Finite Automaton,
  long-plural-form = Non-deterministic Finite Automata,
  short-indefinite = an
}
\DeclareAcronym{nfr}{
  short = \NFR{},
  long = Never Finitely Refutable,
  short-indefinite = an
}
\DeclareAcronym{nfs}{
  short = \NFS{},
  long = Never Finitely Satisfiable,
  short-indefinite = an
}
\DeclareAcronym{os}{
  short = OS,
  long = Operating System,
  short-indefinite = an
}
\DeclareAcronym{pc}{
  short = PC,
  long  = personal computer
}
\DeclareAcronym{pddl}{
  short = PDDL,
  long  = the Planning Domain Definition Language
}
\DeclareAcronym{pid}{
  short = PID,
  long  = Proportional-Integral-Derivative
}
\DeclareAcronym{pocs}{
  short = POCS,
  long  = projections onto convex sets
}
\DeclareAcronym{poset}{
  short = POSET,
  long = Partially-Ordered Set
}
\DeclareAcronym{psdst}{
  short = PSDST,
  long = Propositional Streaming Data-String Transducer,
  short-indefinite = a
}
\DeclareAcronym{psl}{
  short = PSL,
  long = Property Specification Language
}
\DeclareAcronym{ptl}{
  short = PTL,
  long  = Propositional Temporal Logic
}
\DeclareAcronym{ptp}{
  short = PTP,
  long  = Precision Time Protocol
}
\DeclareAcronym{qbf}{
  short = QBF,
  long = Quantified Boolean Formula,
  long-plural-form = Quantified Boolean Formulae
}
\DeclareAcronym{qea}{
  short = QEA,
  long = Quantified Event Automata
}
\DeclareAcronym{ra}{
  short = RA,
  long = Rejection Accuracy,
  short-indefinite = an
}
\DeclareAcronym{rems}{
  short = REMS,
  long  = Rover Environmental Monitoring Station 
}
\DeclareAcronym{resp}{
  short = RESP,
  long  = the REdis Serialization Protocol,
  short-indefinite = an
}
\DeclareAcronym{rm}{
  short = RM,
  long = Rate Monotonic,
  short-indefinite = an
}
\DeclareAcronym{rnn}{
  short = RNN,
  long  = Recurrent Neural Network,
  short-indefinite = an
}
\DeclareAcronym{roc}{
  short = ROC,
  long  = Receiver Operating Characteristic,
  short-indefinite = an
}
\DeclareAcronym{ros}{
  short = ROS,
  long  = Robot Operating System
}
\DeclareAcronym{rpn}{
  short = RPN,
  long  = Reverse Polish Notation,
  short-indefinite = an
}
\DeclareAcronym{rv}{
  short = RV,
  long = Runtime Verification,
  short-indefinite = an
}
\DeclareAcronym{rvltl}{
  short = RV-LTL,
  long  = Runtime Verification \ac{ltl},
  short-indefinite = an
}
\DeclareAcronym{saam}{
  short = SAAM,
  long = Software Architecture Analysis Method,
  short-indefinite = an
}
\DeclareAcronym{sax}{
  short = SAX,
  long  = Symbolic Aggregate approXimation
}
\DeclareAcronym{sclwnrf}{
  short = LANL,
  long  = System Call Logs with Natural Random Faults
}
\DeclareAcronym{sdk}{
  short = SDK,
  long = Software Development Kit
}
\DeclareAcronym{sdl}{
  short = SDL,
  long = Specification and Description Language,
  short-indefinite = an
}
\DeclareAcronym{sdst}{
  short = SDST,
  long = Streaming Data-String Transducer,
  short-indefinite = an
}
\DeclareAcronym{sfr}{
  short = \SFR{},
  long = Sometimes Finitely Refutable,
  short-indefinite = an
}
\DeclareAcronym{sfs}{
  short = \SFS{},
  long = Sometimes Finitely Satisfiable,
  short-indefinite = an
}
\DeclareAcronym{sft}{
  short = SFT,
  long = Symbolic Finite-state Transducer,
  short-indefinite = an
}
\DeclareAcronym{smm}{
  short = SMM,
  long = Symbolic Mealy Machine,
  short-indefinite = an
}
\DeclareAcronym{soc}{
  short = SoC,
  long  = System-on-a-Chip,
  short-indefinite = an,
  long-plural-form = Systems-on-Chips
}
\DeclareAcronym{spade}{
  short = SPACE,
  long = Sequential PAttern Discovery using Equivalence classes
}
\DeclareAcronym{spam}{
  short = SPAM,
  long = Sequential PAttern Mining
}
\DeclareAcronym{sql}{
  short = SQL,
  long  = Structured Query Language
}
\DeclareAcronym{srv}{
  short = SRV,
  long = Stream Runtime Verification,
  short-indefinite = an
}
\DeclareAcronym{ssps}{
  short = SSPS,
  long  = Sequential Sense-Process-Send
}
\DeclareAcronym{stdev}{
  short = stdev,
  long  = standard deviation
}
\DeclareAcronym{stl}{
  short = STL,
  long = Signal Temporal Logic,
  short-indefinite = an
}
\DeclareAcronym{svpa}{
  short = SVPA,
  long = Symbolic Visibly Pushdown Automaton,
  long-plural-form = Symbolic Visibly Pushdown Automata,
  short-indefinite = an
}
\DeclareAcronym{tltl}{
  short = TLTL,
  long  = Timed \ac{ltl}
}
\DeclareAcronym{tksr}{
  short = TKSR,
  long = Time Series Knowledge Representation
}
\DeclareAcronym{tre}{
  short = TRE,
  long = Timed Regular Expression
}
\DeclareAcronym{udp}{
  short = UDP,
  long = User Datagram Protocol
}
\DeclareAcronym{uhml}{
  short = $\mu$HML,
  long = $\mu$-Hennessy-Milner Logic,
  short-indefinite = a
}
\DeclareAcronym{uml}{
  short = UML,
  long  = Unified Modeling Language
}
\DeclareAcronym{wcet}{
  short = WCET,
  long = Worst-Case Execution Time
}
\DeclareAcronym{wsdl}{
  short = WSDL,
  long = Web Services Description Language
}
\DeclareAcronym{xml}{
  short = XML,
  long  = Extensible Markup Language,
  short-indefinite = an
}
\newcommand{\set}[1]{\{#1\}}
\newcommand{\true}{\mathit{true}}
\newcommand{\false}{\mathit{false}}
\newcommand{\cleft}{\ell}
\newcommand{\cright}{r}
\newcommand{\pspace}{\textsc{PSpace}\xspace}
\newcommand{\ptime}{\textsc{PTime}\xspace}
\newcommand{\aptime}{\textsc{APTime}\xspace}
\newcommand{\apspace}{\textsc{APSpace}\xspace}
\newcommand{\exptime}{\textsc{ExpTime}\xspace}
\newcommand{\expspace}{\textsc{ExpSpace}\xspace}
\newcommand{\nexptime}{\textsc{NExpTime}\xspace}
\newcommand{\aexptimepoly}{\textsc{AExpTime}(poly)\xspace}
\newcommand{\tm}{\mathcal{M}}
\newcommand{\nfer}{{\tt nfer}\xspace}
\newcommand{\Nfer}{{\tt Nfer}\xspace}
\newcommand{\infer}{{\tt inc-}\nfer}
\newcommand{\Infer}{{\tt Inc-}\nfer}
\newcommand{\extnfer}{{\tt ext-}\nfer}
\newcommand{\etal}{et al.\xspace}
\renewcommand{\mid}{:}
\newcommand{\natty}{\mathbb{N}}
\newcommand{\boolty}{\mathbb{B}}
\newcommand{\powerset}[1]{2^{#1}}
\newcommand*{\Cdot}{\raisebox{-0.25ex}{\scalebox{1.6}{.}}}
\newcommand{\where}[1][1]{\Cdot\ifnum #1=0{}\else{\foreach \n in {1,...,#1}{\ }}\fi} 
\newcommand{\Ident}{\mathcal{I}}
\newcommand{\mapty}{\mathbb{M}}
\newcommand{\emptymap}{\{\ \}}
\newcommand{\pto}{}
\newcommand{\pgets}{}
\DeclareRobustCommand{\pto}{\mathrel{\mathpalette\p@to@gets\to}}
\DeclareRobustCommand{\pgets}{\mathrel{\mathpalette\p@to@gets\gets}}
\newcommand{\p@to@gets}[2]{%
  \ooalign{\hidewidth$\m@th#1\mapstochar\mkern5mu$\hidewidth\cr$\m@th#1\to$\cr}%
}
\newcommand{\Idkw}{\textit{id}}
\newcommand{\Startkw}{\textit{start}}
\newcommand{\Endkw}{\textit{end}}
\newcommand{\Mapkw}{\textit{map}}
\newcommand{\Idof}[1]{\Idkw(#1)}
\newcommand{\Startof}[1]{\Startkw(#1)}
\newcommand{\Endof}[1]{\Endkw(#1)}
\newcommand{\Mapof}[1]{\Mapkw(#1)}
\newcommand{\clockty}{\natty}
\newcommand{\target}{\eta_{T}\xspace}
\newcommand{\beforekw}{{\bf before}}
\newcommand{\duringkw}{{\bf during}}
\newcommand{\slicekw}{{\bf slice}}
\newcommand{\meetkw}{{\bf meet}}
\newcommand{\coincidekw}{{\bf coincide}}
\newcommand{\startkw}{{\bf start}}
\newcommand{\finishkw}{{\bf finish}}
\newcommand{\overlapkw}{{\bf overlap}}
\newcommand{\unlesskw}{{\bf unless}}
\newcommand{\containkw}{{\bf contain}}
\newcommand{\followkw}{{\bf follow}}
\newcommand{\afterkw}{{\bf after}}
\newcommand{\before}[2]{#1 \ \beforekw\  #2}
\newcommand{\meet}[2]{#1 \ \meetkw\  #2}
\newcommand{\follow}[2]{#1 \ \unlesskw \ \followkw\ #2}
\newcommand{\nferrule}[6]{#1 \leftarrow #2\ #3\ #4 \textbf{ where } #5 \textbf{ map } #6}
\newcommand{\coinciderule}[5]{\nferrule{#1}{#2}{\coincidekw{}}{#3}{#4}{#5}}
\newcommand{\shorteq}{%
  \settowidth{\@tempdima}{n}
  \resizebox{\@tempdima}{\height}{=}%
}
\newcommand{\eventty}{\mathbb{E}}
\newcommand{\tracety}{\eventty^*}
\newcommand{\intervalty}{\mathbb{I}}
\newcommand{\event}{\varepsilon}
\newcommand{\pool}{\pi}
\newcommand{\selection}{\partial}
\newcommand{\minimality}{\textbf{minimality}\,}
\newcommand{\interp}[2]{R[ \! [ #1 ] \! ] \: #2}
\newcommand{\interpmin}[2]{R_{\text{\tiny{min}}}[ \! [ #1 ] \! ] \: #2}
\newcommand{\seminterp}[2]{S[ \! [ #1 ] \! ] \: #2}
\newcommand{\traceinterpX}[4][]{%
\if\relax\detokenize{#4}\relax
  \if\relax\detokenize{#3}\relax
    T[ \! [ #2 ] \! ]%
  \else
  T\!{\text{\tiny#3}}[ \! [ #2 ] \! ]%
  \fi
\else
  \if\relax\detokenize{#3}\relax
  T\!{\overset{{\scriptscriptstyle#4}}{\text{\tiny#3}}}[ \! [ #2 ] \! ]%
  \else
  T\!{\overset{{\scriptscriptstyle#4}}{\text{\tiny#3}}}[ \! [ #2 ] \! ]%
  \fi
\fi%
\ifthenelse{\isempty{#1}}{}{\: #1}}
\newcommand{\traceinterpi}[3][]{\traceinterpX[#3]{#2}{inc}{#1}}
\newcommand{\traceinterpf}[3][]{\traceinterpX[#3]{#2}{full}{#1}}
\newcommand{\traceinterpe}[3][]{\traceinterpX[#3]{#2}{ext}{#1}}
\newcommand{\poolty}{\mathbb{P}}
\newcommand{\rulety}{\Delta}
\newcommand{\initf}[1]{\textit{init}(#1)}
\lstdefinelanguage{semantics}{
  morekeywords=
    {if,then,else,let,in,least,such,that,unless,it,exists}, 
  otherkeywords={}, 
  literate=
    {[:}{{$[\![$}}2
    {:]}{{$]\!]$}}2
    {Pool}{{$\poolty$}}2
    {Interval}{{$\intervalty$}}2
    {Map}{{$\mapty$}}2
    {Clock}{{$\clockty$}}4
    {Trace}{{$\tracety$}}2
    {Nats}{{$\natty$}}2
    {emptyset}{{$\varnothing$}}2
    {forall}{{$\forall$}}2
    {overlineforall}{{$\overline{\forall}$}}2
    {not}{{$\neg$}}2
    {isin}{{$\in$}}1
    {isnotin}{{$\notin$}}2
    {union}{{$\cup$}}2
    {UNION}{{$\bigcup$}}2
    {intersect}{{$\cap$}}2
    {superset}{{$\supset$}}2
    {subseteq}{{$\subseteq$}}2
    {max}{{$\max$}}2
    {min}{{$\min$}}2
    {Exists}{{$\exists$}}2
    {selection}{{$\selection$}}2
    {minimality}{{\minimality}}8
    {:-}{{$\where$}}2  
    {&&}{{$\wedge$}}2
    {||}{{$\vee$}}2
    {!=}{{$\neq$}}2
    {>=}{{$\geq$}}2
    {=<}{{$\leq$}}2
    {<.}{{$\langle$}}2
    {.>}{{$\rangle$}}2
    {...}{{$\cdots$}}2
    {oplus}{{$\oplus$}}2
    {bottom}{{$\bot$}}2
    {emptymap}{{$\emptymap$}}2
    {ominus}{{$\ominus$}}2
    {::=}{{$\doteq$}}2
    {<-}{{$\leftarrow$}}2
    {->}{{$\rightarrow$}}2
    {<=}{{$\Leftarrow$}}2
    {=>}{{$\Rightarrow$}}2
    {!=>}{{$\not\Rightarrow$}}2
    {|->}{{$\mapsto$}}2
    {Mapof}{{$\Mapkw$}}3
    {Idof}{{$\Idkw$}}2
    {Startof}{{$\Startkw$}}4
    {Endof}{{$\Endkw$}}2
    {mapkw}{{\bf map}}4
    {wherekw}{{\bf where}}6
    {RuleSem}{{$R$}}2
    {RuleSel}{{$R_{\selection}$}}2
    {RuleMin}{{$R_{\scriptscriptstyle{min}}$}}2
    {SpecSem}{{$S$}}2
    {TraceSemI}{{$T\!{\text{\tiny{inc}}}$}}2
    {TraceSemF}{{$T\!{\text{\tiny{full}}}$}}1
    {TraceSemE}{{$T\!{\text{\tiny{ext}}}$}}1
    {Lambda}{$\Lambda$}2
    {Delta}{$\Delta$}2
    {Phi}{$\Phi$}2
    {Psi}{$\Psi$}2
    {Upsilon}{$\Upsilon$}2
    {pi}{$\pi$}2
    {pi1}{$\pi_1$}2
    {pi2}{$\pi_2$}2
    {piI}{$\pi_i$}2
    {piJ}{$\pi_{i+1}$}4
    {piN}{$\pi_n$}2
    {piprime}{$\pi^{\prime}$}2
    {eta}{$\eta$}1
    {etaP}{$\eta^{\prime}$}1
    {eta1}{$\eta_1$}2
    {eta2}{$\eta_2$}2  
    {eta3}{$\eta_3$}2
    {tau}{{$\tau$}}2
    {eps}{{$\event$}}2
    {eps1}{{$\event_1$}}2
    {eps2}{{$\event_2$}}2
    {eff}{{$f$}}2
    {gee}{{$g$}}2
    {s0}{{$s$}}1
    {e0}{{$e$}}1
    {J0}{{$J$}}1
    {K0}{{$K$}}1
    {L0}{{$L$}}1
    {M0}{{$M$}}1
    {MP}{{$M^{\prime}$}}1
    {s1}{{$s_1$}}2
    {s2}{{$s_2$}}2
    {s3}{{$s_3$}}2
    {sP}{{$s^{\prime}$}}1
    {e1}{{$e_1$}}2
    {e2}{{$e_2$}}2
    {e3}{{$e_3$}}2
    {eP}{{$e^{\prime}$}}1
    {i0}{{$i$}}1
    {i1}{{$i_{1}$}}1
    {i2}{{$i_{2}$}}1
    {init}{{$init$}}2
    {topsort}{{\textit{topsort}}}4
    {lambda1}{{$\lambda_1$}}2
    {lambdaN}{{$\lambda_n$}}2
    {LambdaSet}{{$\powerset{\Lambda}$}}2
    {rule}{{$\delta$}}2
    {delta1}{{$\delta_1$}}2
    {delta2}{{$\delta_2$}}2
    {deltaN}{{$\delta_n$}}2
    {Di}{{$D_i$}}2
    {D1}{{$D_1$}}2
    {Dl}{{$D_\ell$}}2
    {,,,}{{$,\ldots,$}}3
    {Comps}{{$\mathcal{D}$}}2
    {SIZEl}{{$_{\ell+1}$}}2
    {ENN}{$n$}1
    {DeltaList}{{$\Delta^{*}$}}2
    {RuleList}{{$\mathbf{d}$}}2
    {AllI}{{$_{i > 0}$}}3,
  sensitive=true, 
  morecomment=[l]{//}, 
  morecomment=[s]{/*}{*/},
  escapeinside={(*}{*)},
  stringstyle=\ttfamily,
  aboveskip=5mm,
  belowskip=5mm,
  showstringspaces=false,
  columns=flexible,
  morestring=[b]",
  numberstyle=\scriptsize,
  moredelim=[is][\em]{@}{@}
}
\newcommand{\stopp}{\texttt{STOP}}
\newcommand{\inc}[1]{\texttt{INC(X}_{#1}\texttt{)}}
\newcommand{\dec}[1]{\texttt{DEC(X}_{#1}\texttt{)}}
\newcommand{\ite}[2]{\texttt{IF X}_{#1}\texttt{=0 GOTO }#2}
\newcommand{\instr}{\texttt{I}}
\newcommand{\semantics}{\lstset{language=semantics,backgroundcolor=\color{white}, frame=none,basicstyle={\normalsize},aboveskip=.25em,belowskip=.25em}}
\newcommand{\isem}{\semantics\lstinline}
\begin{document}

\title{The Complexity of Evaluating nfer}

\author{
Sean Kauffman
\and
Martin Zimmermann
}
\institute{
Aalborg University, Denmark\\
\email{\{seank,mzi\}@cs.aau.dk}
}

\maketitle \thispagestyle{plain} \pagestyle{plain}

\begin{abstract}

\Nfer is a rule-based language for abstracting event streams into a hierarchy of intervals with data.
\Nfer has multiple implementations and has been applied in the analysis of spacecraft telemetry and autonomous vehicle logs.
This work provides the first complexity analysis of \nfer evaluation, i.e., the problem of deciding whether a given interval is generated by applying rules.

We show that the full \nfer language is undecidable and that this depends on both recursion in the rules and an infinite data domain.
By restricting either or both of those capabilities, we obtain tight decidability results.
We also examine the impact on complexity of exclusive rules and minimality.
For the most practical case, which is minimality with finite data, we provide a polynomial-time algorithm.

\keywords{
Interval Logic
  \and 
Complexity
  \and 
Runtime Verification
}
\end{abstract}

\section{Introduction}
\label{sec:introduction}
\Nfer is a rule-based language for event stream analysis, developed with scientists from \ac{nasa}'s \ac{jpl} to analyze telemetry from spacecraft~\cite{kauffman2016towards,kauffman2016nfer,kauffman2017inferring}.
\Nfer rules calculate data over periods of time called intervals.
\Nfer compares and combines these intervals to form a hierarchy of abstractions that is easier for humans and machines to comprehend than a trace of discrete events.
This differs from traditional \ac{rv} which computes language inclusion and returns verdicts.
The equivalent problem for \nfer, called the evaluation problem, is to determine if an interval will be present in \nfer's output given a list of rules and an input trace.

The \nfer syntax is based on \ac{atl}~\cite{allen1983maintaining} and is designed for simplicity and brevity in many contexts.
When it was originally introduced, \nfer was used to find false positives among warning messages from the \ac{msl}, i.e., the Curiosity rover, at \ac{jpl}~\cite{kauffman2016nfer}.
Researchers found the language to be much more concise than the ad hoc Python scripts in common use.
\Nfer has also been deployed to capture disagreements between parallel \ac{pid} controllers in an embedded system ionizing radiation experiment~\cite{narayan2017system,kauffman2017inferring} and to locate unstable gear shifts in an autonomous vehicle~\cite{kauffman2020palisade}.

\Nfer is expressive enough for many applications and termination of \nfer has been conjectured to be undecidable~\cite{kauffman2021phd}.
The intuition for \nfer undecidability is that recursion in its rules is possible and the intervals \nfer computes may carry data from an infinite domain.

Despite this expressiveness, \nfer's implementations have been demonstrated to be fast in practice.
Both the C~\cite{nferio} and Scala~\cite{nfer-scala} versions have been compared against tools such as LogFire and Prolog~\cite{kauffman2017inferring}, Siddhi~\cite{kauffman2020palisade}, MonAmi and \texttt{DejaVu}~\cite{havelund2021monami}, and TeSSLa~\cite{kauffman2021nfer} and in every case found to be faster than the alternatives performing the same analysis.
The question remains if \nfer evaluation is indeed undecidable and, if so, if there are useful fragments of \nfer with a tractable evaluation problem.


\paragraph{Our Contribution}



In this work, we determine the complexity of evaluating different fragments of \nfer.
We find that any one of several restrictions on the language permit decidable evaluation and we prove tight bounds for almost all of these fragments.

We begin by defining a natural syntactic fragment of \nfer using only inclusive rules called \infer.
Full \nfer supports a form of negation using what are called exclusive rules, but we show that these are unnecessary to obtain undecidability.
The result relies, instead, on recursion between rules and on intervals carrying data from an infinite domain.
This proves the conjecture mentioned above.

To regain decidability, we then examine language fragments where either or both of these capabilities are restricted.
We prove that, without recursion, \infer evaluation is \nexptime-complete, without infinite data it is \exptime-complete, and without either it is \pspace-complete.

We then introduce exclusive rules and examine the full \nfer language.
It has been openly questioned what effect negation has on the expressiveness of \nfer~\cite{havelund2021monami}.
Of note is that the semantics of exclusive rules restrict the use of recursion.
In particular, to ensure that an \nfer instance with exclusive rules yields a unique output, exclusive rules may not appear within cycles.
\Nfer previously only supported exclusive rules in cycle-free settings, but we extend the semantics here to support exclusive rules alongside cycles in inclusive (i.e., non-exclusive) rules.

We prove that, with finite data, adding exclusive rules has no effect and \nfer evaluation remains \pspace-complete without cycles and \exptime-complete with cycles.
Without cycles and infinite data, however, we prove that the problem is \aexptimepoly-complete.
The remaining combination, i.e., exclusive rules with cycles and infinite data is undecidable, as the problem is already so without exclusive rules.

We go on to examine the effect of minimality on the complexity of \nfer evaluation.
Minimality is a so-called meta-constraint on the results of \nfer that was a primary motivator of \nfer's development, since it was discovered existing tools like Prolog struggled with such meta-constraints~\cite{kauffman2017inferring}.
We show that minimality has a substantial effect on the complexity of \nfer evaluation.
With infinite data, we prove that the problem is in \exptime.
The most common method of using \nfer is with minimality and finite data, however, and we prove evaluation for this configuration is in \ptime.
These results are independent of the use of exclusive rules and cycles. 

Table~\ref{table:allrefs} shows an overview of our results.

\begin{table}[ht]
  \caption{Overview of our complexity results. The asterisk~$^\ast$ signifies that only cycles consisting of inclusive rules are allowed.}
  \label{table:allrefs}
  \centering
  \begin{tabular}{llllll}
  \toprule
Semantics & Cycles & Data & Minimality & Complexity & Reference\\
\midrule
inclusive & yes & infinite & no & undecidable & Theorem \ref{thm:full}\\

inclusive & no  & finite   & no & \pspace-complete&Theorem \ref{thm:inc_free_finite}\\

inclusive & yes  & finite   & no & \exptime-complete&Theorem \ref{thm:inc_finite}\\

inclusive & no  & infinite & no & \nexptime-complete&Theorem~\ref{thm:inc_free_infinite}\\

\midrule

full & no  & finite   & no & \pspace-complete&Theorem \ref{thm:excl_free_finite}\\

full & no  & infinite & no & \aexptimepoly-complete&Theorem \ref{thm:excl_free_infinite}\\

\midrule

extended  & yes$^\ast$& finite   & no & \exptime-complete&Theorem \ref{thm:extended_finite}\\

extended  & yes$^\ast$& infinite   & no & undecidable &Theorem \ref{thm:full}\\

extended  & yes$^\ast$& finite   & yes& \ptime&Theorem \ref{thm:minimality_finite}\\

extended  & yes$^\ast$& infinite & yes& \exptime&Theorem \ref{thm:minimality_infinite}\\
  \end{tabular}
 \end{table}

This work extends the authors' previous paper presented at TASE~2022~\cite{kauffman2022complexity}.
Here, we present the complete proofs for each complexity result.
We also prove a new lower bound for full semantics with infinite data.
Finally, we introduce the \nfer extended semantics and provide complexity results for the relevant fragments under it.

\paragraph{Related Work}

\Nfer is closely related to other classes of declarative programming systems but it differs from them all in several ways.
For example, a rule-based programming system modifies a database of facts~\cite{barringer2011tracecontract,havelund2015logfire}.
Unlike these systems, however, \nfer is monotonic and can only add intervals, not remove them.
\Nfer also resembles \ac{cep} systems where declarative rules are applied to compute information from a trace of events~\cite{chen2000niagara,luckham2008cep,suhothayan2011siddhi}.
\Ac{cep} systems do not usually include explicit notions of time or temporal relationships, though, which are central to \nfer.
In this way, \nfer more closely resembles stream-\ac{rv} systems~\cite{halle2016rv,convent2018tessla,faymonville2019realtime}.
Still, \nfer is differentiated from these systems by its emphasis on temporal intervals and its \ac{atl}-based syntax.


Some research has examined the complexity of logics based on \ac{atl}, specifically \ac{hs}~\cite{halpern1991propositional}.
Montanari~\etal showed that the satisfiability problem for the subset of \ac{hs} consisting of only the temporal operators \emph{begins/begun by} and \emph{meets} is \expspace-complete over the natural numbers~\cite{montanari2010abb}.
Later, they showed that adding the \emph{met by} operator increases the complexity such that the language is only decidable over finite total orders~\cite{montanari2010maximal}.
Aceto~\etal identified the expressive power of all fragments of \ac{hs} over total orders as well as only dense total orders~\cite{aceto2016classification}.
\Nfer is not a modal logic, however, and these complexity results are not relevant to its evaluation problem.

\section{The Inclusive nfer Language}
\label{sec:language}

\semantics

The \nfer language supports two types of rules: inclusive rules and exclusive rules.
This section desribes the \texttt{inclusive-}\nfer formalism, subsequently abbreviated \infer, that supports only inclusive rules.
\Infer is sufficiently expressive to obtain an undecidability result and we find that initially omitting exclusive rules simplifies our presentation.
\Infer is also a natural subset of \nfer that was first introduced in~\cite{kauffman2016nfer}.
It supports many use cases, including the \ac{msl} case-study described above.
The implementation of \nfer written in Scala at \ac{jpl}~\cite{nfer-scala,kauffman2017inferring} also supports only inclusive rules.
%
We expand our analysis to include exclusive rules in Section~\ref{sec:exclusive} (without any cycles) and in Section~\ref{sec:extended} (with cycles in inclusive rules). 
Finally, Section~\ref{sec:minimality} addresses minimality, an important extension of \nfer semantics.
Note that, to improve comprehensibility and simplify later proofs, the semantics presented in this section differs slightly from prior work but these changes do not affect the language capabilities.


\subsection{Preliminary Notation}
\label{paragraph:notation}

We denote the set of nonnegative integers as $\natty$. 
The set of Booleans is given as $\boolty = \{\true, \false\}$.
We fix a finite set~$\Ident$ of identifiers.
$\mapty$ is the set of all of maps, where a map $M \in \mapty$ is a partial function ${M : \Ident \pgets \natty \cup \boolty}$.


An event represents a named state change in an observed system.
An event is a triple $(\eta, t, M)$ where $\eta \in \Ident$ is its identifier, $t \in \clockty$ is the timestamp when it occurred, and $M \in \mapty$ is its map of data.
The set of all events is $\eventty = \Ident \times \clockty \times \mapty$.
A sequence of events $\tau \in \eventty^*$ is called a \emph{trace}.

Intervals represent a named period of state in an observed system.
An interval is a 4-tuple $(\eta, s, e, M)$ where $\eta \in \Ident$ is its identifier, $s, e \in \clockty$ are the starting and ending timestamps where $s \leq e$, and $M \in \mapty$ is its map of data.
The set of all intervals with data is $\intervalty = \Ident \times \clockty \times \clockty \times \mapty$. 
A set of intervals is called a \emph{pool} and the set of all pools is $\poolty = \powerset{\intervalty}$.
We say that an interval~$i = (\eta, s, e, M)$ is labeled by $\eta$.
We define the functions $\Idof{i} = \eta$, $\Startof{i} = s$, $\Endof{i} = e$, and $\Mapof{i} = M$.

\subsection{Syntax}
\label{paragraph:syntax}
Inclusive rules test for the existence of two intervals matching constraints.
When such a pair is found, a new interval is produced with an identifier specified by the rule.
The new interval has timestamps and a map derived by applying functions, specified in the rule, to the matched pair of intervals.
We define the syntax of these rules, including mathematical functions to simplify the presentation, as follows:
\[\nferrule{\eta}{\eta_1}{\oplus}{\eta_2}{\Phi}{\Psi}\]

\noindent
where, $\eta, \eta_1, \eta_2 \in \Ident$ are identifiers, 
$\oplus \in \{\beforekw{},\meetkw{},\duringkw{},\coincidekw{},$ $\startkw{},$ $\finishkw{},\overlapkw{},\slicekw{}\}$ is a \emph{clock predicate} on three intervals (one for each of $\eta,\eta_1,\text{ and } \eta_2$),
$\Phi : \mapty \times \mapty \rightarrow \boolty$ is a \emph{map predicate} taking two maps and returning a Boolean representing satisfaction of a constraint, and 
$\Psi : \mapty \times \mapty \rightarrow \mapty$ is a \emph{map update} taking two maps and returning a map.

We omit the precise syntax for specifying map predicates and updates, but we require that these functions are limited to only simple arithmetic operations.
This matches what is possible using the C \nfer tool~\cite{kauffman2021nfer}.
Specifically, map predicates and map updates must be expressible using the standard mathematical operations $+$, $-$, $\cdot$ (multiplication), $/$ (division), $\bmod$, the comparisons $<,\leq,>,\geq,=$, and the Boolean operators $\wedge, \vee, \neg$.
This limitation excludes exponentiation and any form of recursion in the functions.
Since we do not support real numbers in the theory, division is limited to integer quotients.
These decisions are discussed in Section~\ref{sec:discussion}.

\subsection{Semantics}
\label{paragraph:semantics}

\Infer defines how rules are interpreted to generate pools of intervals from inputs.
The semantics utilizes functions, referenced by the rule syntax, that specify the temporal and data relationships between intervals.
%
The semantics of the \nfer language is defined in three steps: the semantics~$R$ of individual rules on pools, the semantics $S$ of a specification (a list of rules) on pools, and finally the semantics $T$ of a specification on traces of events.

We first define the semantics of inclusive rules with the interpretation function $R$.
Let $\rulety$ be the set of all rules.
Semantic functions are defined using the brackets \isem{[:_:]} around syntax being given semantics.

\begin{lstlisting}
       RuleSem[:_:] : Delta -> Pool -> Pool
       RuleSem[:eta <- eta1 oplus eta2 wherekw Phi mapkw Psi:]pi = 
            { i0 isin  Interval : i1,i2 isin  pi :-
                        Idof(i0) = eta && Idof(i1) = eta1 && Idof(i2) = eta2  &&
                        oplus(i0,i1,i2) && Phi(Mapof(i1),Mapof(i2)) && 
                        Mapof(i0) = Psi(Mapof(i1),Mapof(i2)) }
\end{lstlisting}
In the definition, a new interval $i$ is produced when two existing intervals in $\pool$ match the identifiers $\eta_1$ and $\eta_2$, the temporal constraint $\oplus$, and the map constraint $\Phi$.
$\oplus$ defines the start and end timestamps of $i$ and $\Psi$ defines its map.

The possibilities referenced by $\oplus$ are shown in Figure~\ref{fig:inclusiveops} and formally defined in Table~\ref{table:inclusiveops}.
These clock predicates relate two intervals using the familiar \ac{atl} temporal operators and also specify the start and end timestamps of the produced intervals.
In the figure, the two matched intervals are shown as dark-gray boxes where time flows from left to right and the light-gray box is the produced interval.
Note that the generated interval labeled by $C$ has start and end timestamps inherited from the intervals labeled $A$ and $B$, i.e., no new timestamps are generated by applying a rule.
For example, given intervals $i,i_{1},i_{2}$ where ${\Idof{i} = A}$, ${\Idof{i_{1}} = B}$ and ${\Idof{i_{2}} = C}$, ${A \leftarrow \meet{B}{C}}$ holds when $\Endof{i_{1}} = \Startof{i_{2}}$, $\Startof{i} = \Startof{i_{1}}$, and $\Endof{i} = \Endof{i_{2}}$.

\newcommand{\allen}[1]{{{\small A $\leftarrow$ B \textbf{#1} C}}}
\begin{figure}[t]
\centering
\begin{tabular}{c|c|c|c}
\rowcolor{white}
\allen{before} & \allen{meet} & \allen{during} & \allen{coincide}\\
\includegraphics[height=1cm]{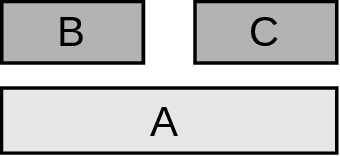} & \includegraphics[height=1cm]{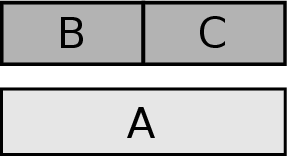} & \includegraphics[height=1.4cm]{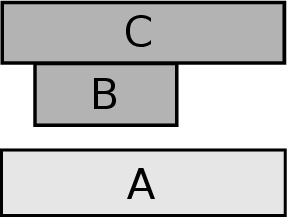} &  \includegraphics[height=1.4cm]{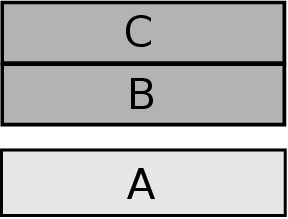}\\
\hline
\rowcolor{white}
\allen{start} & \allen{finish} & \allen{overlap} & \allen{slice} \\
\includegraphics[height=1.4cm]{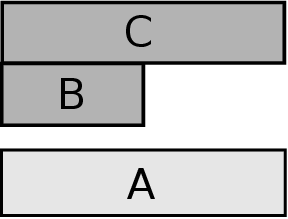} & \includegraphics[height=1.4cm]{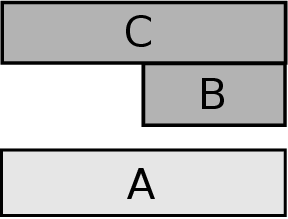} & \includegraphics[height=1.4cm]{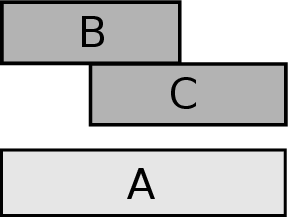} & \includegraphics[height=1.4cm]{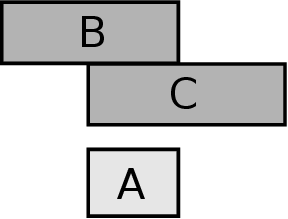} \\
\end{tabular}
\caption{\nfer clock predicates for inclusive rules}
\label{fig:inclusiveops}
\end{figure}

\begin{table}[ht]
  \caption{Formal definition of \nfer clock predicates for inclusive rules}
  \label{table:inclusiveops}
  \centering
  \begin{tabular}{ll}
  \toprule
  Syntax & Definition of $\oplus(i,i_1,i_2)$ \\
  \midrule

  \beforekw{} & $\Endof{i_1} < \Startof{i_2} \wedge \Startof{i} = \Startof{i_1} \wedge \Endof{i} = \Endof{i_2}$\\

\rowcolor{gray!30}  \meetkw{} & $\Endof{i_1} = \Startof{i_2} \wedge \Startof{i} = \Startof{i_1} \wedge \Endof{i} = \Endof{i_2}$\\

    \duringkw{} & $\Startof{i_1} \geq \Startof{i_2} = \Startof{i} \wedge \Endof{i_1} \leq \Endof{i_2} = \Endof{i}$\\

\rowcolor{gray!30}    \coincidekw{} & $\Startof{i_1} = \Startof{i_2} = \Startof{i} \wedge \Endof{i_1} = \Endof{i_2} = \Endof{i}$\\

    \startkw{} & $\Startof{i_1} = \Startof{i_2} = \Startof{i} \wedge \Endof{i} = \max(\Endof{i_1},\Endof{i_2})$\\

\rowcolor{gray!30}  \finishkw{} & $\Startof{i} = \min(\Startof{i_1},\Startof{i_2}) \wedge \Endof{i} = \Endof{i_1} = \Endof{i_2}$\\

  &  $\Startof{i_1} < \Endof{i_2} \wedge \Startof{i_2} < \Endof{i_1}\ \wedge$ \\ 
  \overlapkw{} &   $\Startof{i} = \min(\Startof{i_1},\Startof{i_2})\ \wedge$ \\
   & $\Endof{i} = \max(\Endof{i_1},\Endof{i_2})$\\

\rowcolor{gray!30}&    $\Startof{i_1} < \Endof{i_2} \wedge \Startof{i_2} < \Endof{i_1}\ \wedge$ \\ 
 \rowcolor{gray!30} \slicekw{} &    $\Startof{i} = \max(\Startof{i_1},\Startof{i_2})\ \wedge$ \\
\rowcolor{gray!30} &   $\Endof{i} = \min(\Endof{i_1},\Endof{i_2})$\\

  \end{tabular}
\end{table}

The following one-step interpretation function $S$ defines the semantics of a finite list of rules, also called a specification. 
Given a specification~$\delta_1 \cdots \delta_n \in \rulety^*$ and a pool $\pool \in \poolty$, $\seminterp{\_}$ returns a new pool obtained by recursively applying $\interp{\_}$ to every rule in $\delta_1\cdots \delta_n$ in order, where each is called using the union of $\pool$ with the new intervals returned thus far.
\begin{align*}
&S\;[\![\_]\!]\ :\ \rulety^* \rightarrow \poolty \rightarrow \poolty\\
&S\;[\![\ \delta_1 \cdots \delta_n\ ]\!]\ \pi\ =\ 
\begin{cases}
  S\;[\![\ \delta_2 \cdots \delta_n\ ]\!]\ (\,\pi\; \cup\; R\;[\![\ \delta_1\ ]\!]\ \pi\ ) & \textbf{if } n > 0\\
  \pi & \textbf{otherwise}
\end{cases}
\end{align*}
%
\Infer specifications may contain recursion in the rules, so one application of the specification may not be sufficient to produce all of the intervals.
The interpretation function $\traceinterpi{\_}{}$ for \textit{inc}lusive \nfer defines the semantics of a specification on a pool by applying $S$ until the inflationary fixed point is reached.
{\needspace{5\baselineskip} 
\begin{lstlisting}
  TraceSemI[:_:] : DeltaList -> Pool -> Pool
  TraceSemI[:delta1 ... deltaN:]pi = UNIONAllI piI:-pi1 = pi && piJ = SpecSem[:delta1 ... deltaN:](piI)
\end{lstlisting}
To maintain consistency with prior work and simplify our presentation, we also overload $\traceinterpi{\_}{}$ to operate on a trace of events ${\tau \in \eventty^*}$ by first converting $\tau$ to the pool $\{ \initf{e} : e \text{ is an element of } \tau \}$ where $\initf{\eta,t,M} = (\eta,t,t,M)$.



\begin{example}
\label{example:intro}
Here, we present an example of an \infer specification with rules useful for our complexity analysis.
Fix $\Ident = \set{\eta_j \mid 0 \le j \le n} \cup \set{d}$ and consider the specification~${D_n = \delta_1 \cdots \delta_n}$ where $\delta_j $ is the rule
\[\scalebox{.9}{$
\coinciderule{\eta_{j+1}}{\eta_{j}}{{\eta_{j}\!}}{\!{m_1,m_2 \mapsto m_1 \shorteq\, m_2\!}}{{\!m_1,m_2 \mapsto\!\{d \mapsto m_1(d)^2\}}}.
$}\]
Here, $m_1$ and $m_2$ denote the maps of the intervals matched by the left and right side of the coincide operator and $d$ represents the only element in their domain.

When applying this specification to the trace~$\tau = (\eta_0, 0, \set{d \mapsto 2})$ we obtain
\[\scalebox{.9}{$
\traceinterpi{D_n}{\tau} = \set{ (\eta_0, 0,0,\set{d \mapsto 2}), (\eta_1, 0,0,\set{d \mapsto 4}), \ldots, (\eta_n, 0,0,\set{d \mapsto 2^{2^n}})}.
$}\]
\end{example}
\begin{remark}
In many of our lower bound proofs, the timestamps of intervals are irrelevant. 
For the sake of readability, we will therefore often disregard the timestamps and denote intervals by $(\eta, y_0, \ldots, y_k)$ where $\set{y_0, \ldots, y_k}$ is the image of the map function of the interval. 
Here, we assume a fixed order of the map domain that will be clear from context.

Also, note that the rules~$\delta_j$ in Example~\ref{example:intro} 
produce an interval~$i'$ labeled by $\eta_{j+1}$ from an interval~$i$ such that
$i$ and $i'$ have the same timestamps and the map value of $i'$ is obtained by squaring the map value of $i$.
Many of the rules we use in our lower bounds proofs have this format.
Again, for the sake of readability, we will not spell out those rules but instead say that the rule produces the interval~$(\eta_{j+1}, y^2)$ from an interval of the form~$(\eta_{j}, y)$.
\end{remark}

\subsection{The Evaluation Problem}
We are interested in the \nfer evaluation problem: Given a specification~$D$, a trace~$\tau$ of events, and a target identifier~$\target$, is there an $\target$-labeled interval in $\traceinterpi{D}{\tau}$?
 Here, we measure the size of a single rule in $D$ by the sum of the length of its map predicate and map update measured in their number of arithmetic and logical operators, with numbers encoded in binary.
The size of an event is the sum of the binary encodings of its timestamps and its map values.
We disregard the identifiers, as their number is bounded by the number of events in the input trace and the number of rules.

\section{Complexity Results for Inclusive nfer}
\label{sec:complexity}
In this section, we determine the complexity of the \infer evaluation problem.
In its most general form it is shown to be undecidable, but we show decidability for three natural fragments.

The undecidability result relies on the recursive nature of \infer, i.e., an $\eta$-labeled interval can be (directly or indirectly) produced from an another $\eta$-labeled interval, and on the fact that the map functions range over the natural numbers, i.e., we have access to an infinite data domain.

\begin{theorem}
\label{thm:full}
The evaluation problem for \infer{} is undecidable.
\end{theorem}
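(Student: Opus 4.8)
The plan is to reduce from the halting problem for two-counter (Minsky) machines, exploiting the two features highlighted in the remark preceding the theorem: recursion between rules and the infinite data domain $\natty$. A Minsky machine has a finite set of instructions manipulating two counters $X_1, X_2$ via increment, decrement, and zero-test-with-jump, and halting is undecidable. I would encode a configuration of the machine — a program counter value $\ell$ together with the two counter values $c_1, c_2 \in \natty$ — as a single interval. Concretely, fix a target-like identifier, say $\eta_{\mathrm{cfg}}$, and represent configuration $(\ell, c_1, c_2)$ by an interval $(\eta_{\mathrm{cfg}}, \ell, c_1, c_2)$ in the notation of the remark (i.e. the map sends three fixed domain elements to $\ell, c_1, c_2$); timestamps are irrelevant, so as in Example~\ref{example:intro} I would keep every interval at timestamp $0$ and use \coincidekw{} throughout. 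The initial trace is the single event encoding the start configuration $(\eta_{\mathrm{cfg}}, \ell_0, 0, 0)$.

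The heart of the construction is a specification $D_{\tm}$ whose rules simulate the transition function one step at a time. For each instruction $\ell$ of the machine I would add a rule (or a constant number of rules) that matches an interval labeled $\eta_{\mathrm{cfg}}$ whose first map value equals $\ell$ and produces the interval encoding the successor configuration: incrementing or decrementing the appropriate counter is just $m \mapsto m(x)+1$ or $m(x)-1$ in the map update, and the program counter is overwritten with the successor label. For a conditional-jump instruction ``if $X_j = 0$ goto $\ell'$ else goto $\ell''$'' I would use two rules guarded by the map predicates $m(x_j) = 0$ and $m(x_j) > 0$ respectively — both available in the restricted function language the excerpt allows. One subtlety: a rule takes \emph{two} intervals, not one. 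I would handle this by always matching the configuration interval against itself via \coincidekw{} (as $D_n$ in Example~\ref{example:intro} does with $\eta_j \ \coincidekw{}\ \eta_j$), so that effectively each rule is a unary transformer on configuration intervals; the map predicate only ever inspects the first argument. Because $T[\![D_{\tm}]\!]\tau$ computes the inflationary fixed point — iterating $S$ and taking the union of all pools ever produced — the pool will contain exactly the interval encodings of \emph{all} configurations reachable from the start configuration. I would then set the target identifier to a distinguished $\eta_{\mathrm{halt}}$ and add one final rule that, from any configuration interval whose program counter equals the (unique) halt instruction, produces an $\eta_{\mathrm{halt}}$-labeled interval. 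Thus an $\eta_{\mathrm{halt}}$-interval appears in $T[\![D_{\tm}]\!]\tau$ if and only if the machine halts, and a decision procedure for \infer{} evaluation would decide the halting problem.

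The main obstacle I anticipate is not the control flow but ensuring the encoding is \emph{faithful in both directions} given the monotone, set-based, fixed-point semantics of \nfer: the pool accumulates intervals forever and never deletes, so I must argue that the rules never synthesize a ``spurious'' configuration interval that does not correspond to a genuine reachable configuration. This requires checking that each rule's guard is mutually exclusive with the others on the program-counter coordinate (so exactly one rule fires per configuration) and that no rule can match a pair of configuration intervals at different steps in a way that fabricates a bad successor — matching an interval with itself via \coincidekw{} plus the fact that the clock predicate forces shared timestamps should rule this out, but it needs to be stated carefully. A second, minor, point to get right is confirming that the arithmetic needed (comparisons with $0$, $\pm 1$) stays within the paper's restricted map-function language, which it clearly does, and that measuring size with binary-encoded numbers causes no issue since all constants used are $0$ and $1$. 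Once faithfulness is established, the equivalence ``$\eta_{\mathrm{halt}}$-interval generated $\iff$ machine halts'' is immediate, and undecidability of \infer{} evaluation follows.
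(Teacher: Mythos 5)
Your proposal is correct and follows essentially the same route as the paper: a reduction from the halting problem for two-counter Minsky machines, encoding each configuration as an interval carrying the counter values, using \coincidekw{}-rules that match a configuration interval against itself to act as unary transformers, and reading off halting from the presence of a distinguished interval in the inflationary fixed point. The only cosmetic difference is that you store the program counter as a third map value guarded by map predicates, whereas the paper uses the line number as the interval's identifier (which also makes the mutual-exclusivity worry you raise structurally immediate); just make sure the decrement rule is split into the $c>0$ and $c=0$ cases as the paper does, since the convention is that decrementing a zero counter has no effect.
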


\begin{proof}
We show how to simulate a two-counter Minsky machine~\cite{minsky1967computation} with \infer rules so that the machine terminates iff an interval with a given target identifier can be generated by the rules.

Formally, a two-counter Minsky machine is a sequence
\[
(0:  \instr_0) (1:  \instr_1) \cdots (k-2:  \instr_{k-2})(k-1:  \stopp), 
\]
of pairs~$(\ell: \instr_\ell)$ where $\ell$ is a line number and $\instr_\ell$ for $0 \le \ell < k-1$ is one of
$\inc{i}$, $\dec{i}$, or $\ite{i}{\ell'}$ with $i \in\set{0,1}$ and $\ell'\in \set{0, \cdots, k-1}$. 

A configuration of the machine is a triple~$(\ell, c_0, c_1)$ consisting of a line number~$\ell$ and the contents~$c_i \in \natty$ of counter~$i$.
The semantics is defined as expected with the convention that a decrement of a zero counter has no effect.
The problem of deciding whether the unique run of a given two-counter Minsky machine starting in the initial configuration~$(0,0,0)$ reaches a stopping configuration (i.e., one of the form~$(k-1, c_0, c_1)$) is undecidable~\cite{minsky1967computation}.

This problem is captured with \infer as follows:
We encode a configuration~$(\ell, c_0, c_1)$ by an interval with identifier~$\ell$ and two map values~$c_0, c_1$. 
These intervals use the same timestamps so we drop them from our notation and also write $(\ell, c_0, c_1)$ for the interval encoding that configuration.

For every line number~$0 \le \ell < k-1$ we have one or two rules that are defined as follows (here, we only consider instruction for the first counter, the rules for the second counter are analogous):
\begin{itemize}
	\item $\instr_\ell = \inc{0}$: We have a rule producing the interval~$(\ell+1, c_0+1, c_1)$ from an interval of the form~$(\ell, c_0, c_1)$.
	 \item $\instr_\ell = \dec{0}$: We have two rules, one producing the interval~$(\ell+1, c_0-1, c_1)$ from an interval of the form~$(\ell, c_0, c_1)$ with $c_0 >0$, and one producing the interval~$(\ell+1, c_0, c_1)$ from an interval of the form~$(\ell, c_0, c_1) $ with $c_0 = 0 $.
	 \item $\instr_\ell = \ite{0}{\ell'}$: We have two rules, one producing the interval~$(\ell', c_0, c_1)$ from an interval of the form~$(\ell, c_0, c_1)$ with $c_0 = 0$, and one producing the interval~$(\ell+1, c_0, c_1)$ from an interval of the form~$(\ell, c_0, c_1)$ with $c_0 > 0$.
\end{itemize}
Then, we have an interval labeled by $k-1$ in the fixed point iff the machine reaches a stopping configuration.
\end{proof}

As already discussed, the undecidability relies both on recursion in the rules and on the map functions having an infinite range. 
In the following, we show that restricting one of these two aspects allows us to recover decidability.
In fact, we give tight complexity bounds for all three fragments. 
We continue by introducing some necessary notation to formalize these two restrictions.

First, recall that a map of an interval is a partial function from $\Ident$ to $\natty \cup \boolty$, i.e., it has an infinite range.
We will consider the evaluation problem restricted to intervals with maps that are partial functions from $\Ident$ to $\set{0,1,\ldots,k-1} \cup \boolty$ with a \emph{bound}~$k$ given in binary and all arithmetic operations performed modulo $k$.
We denote the fixed point resulting from these semantics by $\traceinterpi[k]{\_}{}$.

Second, for a rule~$\nferrule{\eta}{\eta_1}{\oplus}{\eta_2}{\Phi}{\Psi}$ we say that $\eta$ appears on the left-hand side and the $\eta_i$ appear on the right-hand side.
An \infer specification~$D \in \rulety^*$ forms a directed graph $G(D)$ over the rules in $D$ such that there is an edge from $\delta$ to $\delta'$ iff there is an identifier~$\eta$ that appears on the left-hand side of $\delta$ and the right-hand side of $\delta'$.
We say that $D$ contains a cycle if $G(D)$ contains one; otherwise $D$ is cycle-free.

\begin{example}
\label{example:evenodd}
Here, we present an example of an \infer specification that captures the natural, even, and odd numbers via intervals.
There are many ways to do so with \nfer rules, but here we choose rules that serve to demonstrate the graph construction described above.
Note that, like in our proofs, we use \coincidekw{}~rules and only have intervals that begin and end at timestamp zero.
As such, we omit timestamps in our notation.
Also, all intervals we consider have a single map value~$v$ storing an integer.

We define the following rules to compute intervals carrying values in the sets of naturals (labeled $N$), evens ($E$), and odds ($O$).
To simplify our presentation, we define a map predicate and two map updates.
The map predicate ${\Phi_{\text{eq}} = m_1,m_2 \mapsto m_1(v) = m_2(v)}$ tests that the map values of the two intervals are equal.
The map update ${\Psi_{\text{zero}} = m_1,m_2 \mapsto\!\{v \mapsto 0\}}$ sets the map value to zero and the map update ${\Psi_{\text{incr}} = m_1,m_2 \mapsto\!\{v \mapsto m_1(v) + 1\}}$ sets the map value to one more than the value of the (equal, assuming $\Phi_{\text{eq}}$ holds) value from the matched intervals.

Now, consider the input trace~$\tau = (I,\set{v \mapsto 0})$ and the rules
\begin{align*}
\scalebox{.9}{$\delta_1 =$} &\ \scalebox{.9}{$\coinciderule{N}{I}{I}{\Phi_{\text{eq}}}{\Psi_{\text{zero}}}$}\\
\scalebox{.9}{$\delta_2 =$} &\ \scalebox{.9}{$\coinciderule{E}{I}{I}{\Phi_{\text{eq}}}{\Psi_{\text{zero}}}$}\\
\scalebox{.9}{$\delta_3 =$} &\ \scalebox{.9}{$\coinciderule{N}{N}{N}{\Phi_{\text{eq}}}{\Psi_{\text{incr}}}$}\\
\scalebox{.9}{$\delta_4 =$} &\ \scalebox{.9}{$\coinciderule{E}{O}{N}{\Phi_{\text{eq}}}{\Psi_{\text{incr}}}$}\\
\scalebox{.9}{$\delta_5 =$} &\ \scalebox{.9}{$\coinciderule{O}{E}{N}{\Phi_{\text{eq}}}{\Psi_{\text{incr}}}$}.
\end{align*}
These rules form the directed graph shown in Figure~\ref{fig:evenodd}.
In the graph, edges are labeled by the identifier witnessing their existence.
For example, there is an edge from $\delta_2$ to $\delta_5$ labeled by $E$ because $E$ apears on the left-hand side of $\delta_2$ and the right-hand side of $\delta_5$.
The graph contains two cycles: one between $\delta_4$ and $\delta_5$ and one with only $\delta_3$, due to its self-loop.

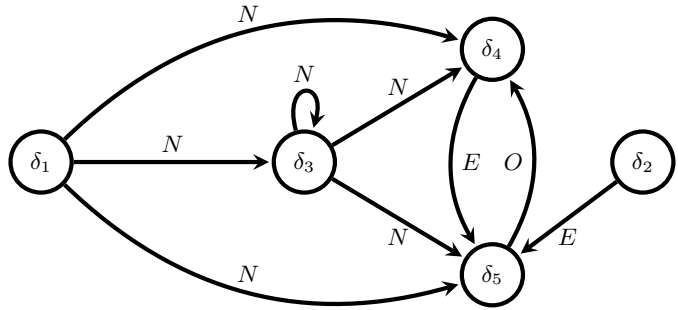
\begin{figure}[h]
  \centering
  \tikzset{>=stealth}
  \begin{tikzpicture}[shorten >=1pt,node distance=2cm,on grid,auto,ultra thick] 
   \node[state] (r1)  {$\delta_1$}; 
   \node[state] (r2)  [right=8cm of r1] {$\delta_2$};
   \node[state] (r3)  [right=3.5cm of r1] {$\delta_3$};
   \node[state] (r4)  [above right=1.5cm and 6cm of r1] {$\delta_4$};
   \node[state] (r5)  [below right=1.5cm and 6cm of r1] {$\delta_5$};
    \path[->] 
    (r1) edge node[above=0] {$N$} (r3)
         edge [bend left=30] node[above=0] {$N$} (r4)
         edge [bend right=30] node[above=0] {$N$} (r5)
    (r2) edge node[below=0] {$E$} (r5)
    (r3) edge [loop above=20] node[above=0] {$N$} ()
         edge node[above=0] {$N$} (r4)
         edge node[below=0] {$N$} (r5)
    (r4) edge [bend right] node {$E$} (r5)
    (r5) edge [bend right] node {$O$} (r4);
  \end{tikzpicture}
  \caption{The directed graph formed by the rules to compute naturals, evens, and odds}
  \label{fig:evenodd}
\end{figure}

Starting with the initial trace~$\tau = (I,\{v \mapsto 0\})$ we obtain the fixed point
\begin{align*}
\traceinterpi{\delta_1 \cdots \delta_5}{(\tau)}{} = &\set{(N,\set{v\mapsto n}) \mid n\in\natty}\, \cup\\
& \set{(E,\set{v\mapsto n}) \mid n\in\natty \text{ is even}}\, \cup\\
& \set{(O,\set{v\mapsto n}) \mid n\in\natty \text{ is odd}} \cup \set{(I, \set{v\mapsto 0})}.
\end{align*}
%
\end{example}

We begin our study of decidable fragments of \infer by considering both the restriction to cycle-free specifications and to finite data at the same time.

\begin{theorem}
\label{thm:inc_free_finite}
The cycle-free \infer evaluation problem with finite data is \pspace-complete.
\end{theorem}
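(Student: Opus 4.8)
The plan is to prove membership in \pspace{} and \pspace-hardness separately. For the upper bound, I would first observe that when $D$ is cycle-free, the rules can be topologically sorted so that each identifier is ``produced'' only by rules appearing in a bounded layer of the dependency DAG; consequently the nesting depth of derivations is at most $|D|$. An $\target$-labeled interval exists in $\traceinterpk{D}{\tau}{k}$ iff there is a derivation tree of depth at most $|D|$ whose leaves are intervals from $init(\tau)$ and whose internal nodes each apply one rule of $D$. I would give an alternating procedure (or equivalently a recursive nondeterministic procedure with bounded recursion depth) that guesses the target interval $(\target, s, e, M)$ — note $s,e$ are timestamps occurring in $\tau$, or derived from them, but by inspection of the clock predicates in Figure~\ref{fig:inclusiveops} every produced timestamp is a start or end timestamp of some input interval, hence representable in polynomial space — then guesses which rule $\delta_j$ produced it, guesses the identifiers and map values of the two witnessing subintervals (each map value lies in $\set{0,\dots,k-1}$, so it is polynomially bounded since $k$ is given in binary), checks the clock predicate $\oplus$, the map predicate $\Phi$, and that $M = \Psi(\cdot,\cdot)$ — all of which are evaluable in polynomial time since map predicates and updates use only the listed arithmetic operations on binary-encoded numbers mod $k$ — and then recurses on the two subintervals. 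Because the recursion depth is bounded by $|D|$ and each stack frame stores only polynomially many bits, the whole computation runs in polynomial space; since $\apspace = \pspace$, this gives the upper bound. The one subtlety I would need to verify is that the set of reachable timestamps really is polynomially bounded: the clock predicates only ever set a produced interval's endpoints to endpoints of the matched intervals, so by induction all timestamps ever produced already appear among the (polynomially many, polynomially large) timestamps in $init(\tau)$, and no arithmetic is performed on them.

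For the lower bound I would reduce from a canonical \pspace-complete problem. The natural choice is the word problem for a polynomial-space Turing machine, or — exploiting the structure already present in the excerpt — \textsf{QBF}. I would encode a quantified Boolean formula with $n$ variables using $n+1$ ``layers'' of identifiers $\eta_0,\dots,\eta_n$, one per variable, together with a small number of additional identifiers encoding the quantifier structure and the matrix; the finite data bound $k$ can be taken as a small constant (e.g. $k=2$) so that map values act as bits. A cycle-free specification of polynomial size can, layer by layer, ``evaluate'' both truth assignments to variable $x_i$ and combine the partial results according to whether $Q_i = \forall$ or $Q_i = \exists$, using the map predicate/update to implement the Boolean combination at each node; the target identifier $\target$ is generated iff the formula is true. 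Cycle-freeness is immediate because the dependency graph runs strictly from layer $i$ to layer $i+1$, and finiteness is immediate because $k$ is constant. (Alternatively, a direct simulation of a space-bounded Turing machine configuration as an interval whose map stores the tape contents — with $k$ polynomial — and one rule per transition, arranged cycle-free by also storing a step counter that strictly decreases, would also work.)

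The main obstacle I anticipate is the lower bound encoding: \nfer rules combine only \emph{two} intervals at a time and the map predicate/update functions, while allowing arithmetic mod $k$, cannot directly manipulate long bitstrings unless $k$ is exponentially large — but an exponentially large $k$ is permitted since $k$ is given in binary, so a single map value can in fact hold an $n$-bit configuration, and the bounded arithmetic suffices to check a Turing-machine transition. Balancing these design choices — keeping the specification cycle-free and polynomial-sized while still threading enough state through the binary pairing of intervals to capture an alternating/space-bounded computation — is where the real care is needed; the \pspace{} upper bound argument is comparatively routine once the ``timestamps stay polynomial'' and ``recursion depth is $|D|$'' observations are in place.
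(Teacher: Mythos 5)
Your overall strategy coincides with the paper's: the lower bound is a reduction from TQBF with the same three-phase structure (generate all valuations layer by layer, check the matrix, then simulate the quantifier prefix from the innermost quantifier outward, using a two-interval \coincidekw-style rule for universal quantifiers and single-interval rules for existential ones), and the upper bound is the same bounded-depth witness-tree search (which the paper defers to Theorem~\ref{thm:excl_free_finite}, where it is run in alternating polynomial time for full \nfer including exclusive rules). The one substantive difference is the encoding of valuations: you keep $k=2$ and spread the $n$ truth values over $n$ map keys, whereas the paper packs a valuation into a single map value as a product of distinct primes and takes $k$ exponentially large (but of polynomial binary representation). Both encodings work; yours is arguably more transparent, while the paper's keeps each map essentially single-valued.

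Two points need repair. First, your justification of the upper bound cites ``$\apspace = \pspace$'', which is false ($\apspace = \exptime$); the correct invocation is either $\aptime = \pspace$ (the depth-$|D|$ alternating search does only polynomially much work along each branch) or, for your recursive nondeterministic formulation, simply $\textsc{NPSpace} = \pspace$, since the stack holds at most $|D|$ polynomial-size frames. Second, your parenthetical alternative lower bound --- simulating a space-bounded Turing machine ``arranged cycle-free by also storing a step counter that strictly decreases'' --- does not work: cycle-freeness is a property of the identifier-dependency graph $G(D)$, not of the data, so a configuration-to-configuration rule reusing one identifier is a self-loop in $G(D)$ regardless of any counter stored in the map, and a \pspace machine may run for exponentially many steps, so unrolling into one identifier per step blows up the specification. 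The QBF route avoids this because the quantifier depth is only $n$; stick with it.
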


\begin{proof}
We only prove the lower bound here, the upper bound is shown for full \nfer in Theorem~\ref{thm:excl_free_finite}.
We proceed by a reduction from TQBF, the problem of determining whether a formula of quantified propositional logic evaluates to true (see, e.g., \cite{BB} for a detailed definition), which is \pspace-hard.
So, fix such a formula~$\varphi$. 
Let $\pi_j$ for $j \ge 1$ denote the $j$-th prime number.
We assume without loss of generality that $\varphi = 
Q_2 x_2 Q_3 x_3 \cdots Q_{\pi_n} x_{\pi_n} \bigwedge\nolimits_{i = 1}^m ( \ell_{i,1} \vee \ell_{i,2} \vee \ell_{i,3} )
$
where each $Q_{\pi_j}$ is in $\set{\exists,\forall}$, and each $\ell_{i,i'}$ is either $x_{\pi_j}$ or $\neg x_{\pi_j}$ for some $j$.
As we label variables by prime numbers, we can uniquely identify a variable valuation~${V \subseteq \set{x_{\pi_j} \mid 1 \le j \le n}}$ by the number $\prod_{x_{\pi_j} \in V} \pi_j$.
As the map values we will consider only have to encode valuations, and are therefore bounded by $\prod_{j \le n}\pi_j$, we can use the bound~$1 + \prod_{j \le n}\pi_j$ on the map values we consider. 

We present three types of rules:
\begin{enumerate}
	\item Rules to \emph{generate} every possible variable valuation (encoded by an interval whose map contains the number representing the valuation).
	\item A rule to \emph{check} whether a valuation satisfies $\bigwedge_{i = 1}^n ( \ell_{i,1} \vee \ell_{i,2} \vee \ell_{i,3} )$. 
	\item Rules to simulate the quantifier prefix to check whether the full formula evaluates to true. 
\end{enumerate}

Let us explain all steps in detail.
As all intervals in this proof will have the same timestamps, we will drop those to simplify our notation. 
Furthermore, the map of an interval will contain a single integer value. 
For these reasons, we denote intervals by $(\eta, s)$ where $\eta$ is an identifier and $s$ is the map value.

To generate the valuations, we start with the trace containing only a single fixed event that yields the interval~$(G_0,1)$.
Further, for $1 \le j \le n$ we have rules producing the intervals~$(G_j, s \cdot \pi_j)$ and $(G_j, s )$ from an interval of the form~$(G_{j-1},s)$ for some~$s$.
The fixed point reached by applying these rules contains the $2^n$ intervals of the form~$(G_n,s)$ where $s$ encodes a variable valuation. 

In the valuation encoded by some $s$, a variable~$x_{\pi_j}$ evaluates to true if $s \bmod \pi_j = 0$ and evaluates to false if $s \bmod \pi_j \neq 0$. 
Hence, to check whether the valuation encoded by some $s$ satisfies $\bigwedge_{i = 1}^m ( \ell_{i,1} \vee \ell_{i,2} \vee \ell_{i,3} )$
we have a rule that produces the interval~$(C_n, s)$ from an interval of the form~$(G_n, s)$ for some $s$ such that $\bigwedge_{i = 1}^m ( \psi_{i,1} \vee \psi_{i,2} \vee \psi_{i,3} )$ evaluates to true, where $\psi_{i,i'}$ is equal to $s \bmod \pi_{j} = 0$ if $\ell_{i,i'} = x_{\pi_j}$, and where $\psi_{i,i'}$ is equal to $s \bmod \pi_{j} > 0$ if $\ell_{i,i'} = \neg x_{\pi_j}$.

We now simulate the quantifier prefix.
Intuitively, we check whether partial variable valuations cause the formula to hold. 
We do so by the following rules:
If the variable~$x_{\pi_j}$ is existentially quantified, we have a rule producing the interval~$
(C_{j-1}, s)$ from an interval of the form~$(C_{j}, s)$ with $s \bmod \pi_j  > 0$, and a rule producing the interval~$
(C_{j-1}, s/\pi_j)$ from an interval of the form~$(C_{j}, s)$  with $s \bmod \pi_j = 0$.
So, to generate an interval labeled by $C_{j-1}$ at least one interval labeled by $C_{j}$ has to exist, and their maps must be compatible.

Finally, 
if the variable~$x_{\pi_j}$ is universally quantified, we have a rule producing the interval~$(C_{j-1}, s)$ from two intervals of the form~$(C_{j}, s)$ and $(C_{j}, s\cdot \pi_j)$ (which can be done using a \coincidekw-rule).
Thus, to obtain an interval labeled by $C_{j-1}$ both intervals labeled by $C_{j}$ with corresponding map values have to exist.

An induction shows that a partial valuation~$V \subseteq \{x_{\pi_j} \mid 1 \le j \le n'\}$ for some $0 \le n' \le n$ satisfies 
$
Q_{\pi_{n'+1}} x_{\pi_{n'+1}} \cdots Q_{\pi_n} x_{\pi_n} \bigwedge\nolimits_{i = 1}^m ( \ell_{i,1} \vee \ell_{i,2} \vee \ell_{i,3} )
$
iff the interval~$(C_{n'}, \prod_{x_{\pi_j} \in V}\pi_j)$ is generated by applying these rules. 
So, for $n' = 0$ we obtain the correctness of our reduction: The formula~$\varphi$ evaluates to true iff $(C_0, 1)$ is in the fixed point induced by the rules above. 

Furthermore, the rules above are cycle-free, there are linearly many rules in the number~$n$ of variables and each rule is of polynomial size in the size of $\varphi$.
Finally, as $\pi_j \le j (\ln j + \ln\ln j)$ for all $j \ge 6$~\cite{primes}, all numbers appearing in the maps of the intervals are bounded by 
\[\prod\nolimits_{j = 1}^n \pi_j \le c \cdot \prod\nolimits_{j = 1}^n j (\ln j + \ln\ln j) \le c \cdot (n (\ln n + \ln\ln n))^n\]
whose binary representation is polynomial in the size of $\varphi$.
Here, $c$ is some constant that is independent of $n$.
\end{proof}

Now, we turn our attention to the remaining two fragments obtained by considering finite-data with cycles and cycle-free specifications with infinite data. 
In both cases, we again prove tight complexity bounds.
For both upper bounds, we rely on algorithms searching for witnesses for the existence of an interval in the fixed point.
As these arguments are used in multiple proofs, we introduce them first in a general format.
So, fix some specification~$D$ and some trace $\tau$ of events.
If $i$ is an interval in $\traceinterpi{D}{\tau}$, then either there is an event $e$ in $\tau$ such that $\initf{e} = i$ (we say that $i$ is initial in this case) or there are intervals~$i_1, i_2$ in $\traceinterpi{D}{\tau}$ and a rule~$\delta \in D$ such that $i$ is obtained by applying $\delta$ to $i_1$ and $i_2$.
So, for every interval~$i_0$ in $\traceinterpi{D}{\tau}$ there is a binary (witness) tree whose nodes are labeled by intervals in $\traceinterpi{D}{\tau}$, whose root is labeled by $i_0$, whose leaves are labeled by initial intervals, and where the children of a node labeled by $i$ are labeled by $i_1$ and $i_2$ such that there is a rule~$\delta$ so that $i$ is obtained by applying $\delta$ to $i_1$ and $i_2$.
Note that the tree might contain multiple occurrences of the same interval. But, we can assume without loss of generality that each path in the tree does not contain a repetition of an interval (if it does we can just remove the part of the tree between the repetitions).
Hence, the height of the tree is bounded by the number of intervals, which might be infinite in the case of infinite data.
Furthermore, if $D$ is cycle-free then the height of a witness tree is also bounded by the number of rules in $D$.
Note that the same arguments also apply to $\traceinterpi[k]{D}{\tau}$ in case we deal with finite data.

\begin{proposition}
\label{prop:witnesstrees}
An interval is in $\traceinterpi{D}{\tau}$ ($\traceinterpi[k]{D}{\tau}$) iff it has a witness tree.	
\end{proposition}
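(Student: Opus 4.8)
The plan is to prove both implications by induction, after first reformulating the inflationary fixed point in a more convenient form. Recall that $\traceinterp{D}{\tau}$ is defined as the union $\bigcup_{i>0}\pi_i$ where $\pi_1 = init(\tau)$ and $\pi_{i+1} = S[\![D]\!]\,\pi_i$. First I would show that this set coincides with the closure of $init(\tau)$ under single rule applications: setting $P_0 = init(\tau)$ and $P_{N+1} = P_N \cup \{\, i_0 \mid \text{there are } \delta \in D \text{ and } i_1,i_2 \in P_N \text{ such that } i_0 \text{ is obtained by applying } \delta \text{ to } i_1, i_2 \,\}$, one has $\traceinterp{D}{\tau} = \bigcup_{N \ge 0} P_N$. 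For the inclusion from right to left one uses that $S[\![D]\!]$ is inflationary (so $\pi_i \subseteq \pi_{i+1}$) and that $R[\![\delta]\!]$ is monotone in its pool argument, which together show that applying any rule of $D$ to two intervals of a pool $\pi$ yields an interval of $S[\![D]\!]\,\pi$; hence $P_{N} \subseteq \pi_{N+1}$ by induction on $N$. For the converse one observes that $S[\![D]\!]\,\pi$ is contained in the $|D|$-fold iteration of the one-step closure operator applied to $\pi$, since $S$ merely applies the rules of $D$ one after another to ever-larger pools. Iterating both inclusions and taking unions identifies the two sets.

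Given this reformulation, the direction ``$\Leftarrow$'' follows by induction on the height of a witness tree for $i$. If the tree is a single leaf, then $i$ is initial, so $i \in init(\tau) = P_0 \subseteq \traceinterp{D}{\tau}$. Otherwise $i$ labels the root, its two children carry strictly smaller witness trees for intervals $i_1, i_2$, so by the induction hypothesis $i_1, i_2 \in P_N$ for a common $N$ (the $P_N$ are increasing); since some rule of $D$ produces $i$ from $i_1$ and $i_2$, we get $i \in P_{N+1} \subseteq \traceinterp{D}{\tau}$. The direction ``$\Rightarrow$'' — which is essentially the informal construction sketched before the proposition, now made precise — follows by induction on the least $N$ with $i \in P_N$. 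If $N = 0$ then $i$ is initial and the single-node tree labeled by $i$ is a witness tree. If $N > 0$, then by minimality $i \notin P_{N-1}$, so there are $\delta \in D$ and $i_1, i_2 \in P_{N-1}$ such that $i$ is obtained by applying $\delta$ to $i_1$ and $i_2$; by the induction hypothesis $i_1$ and $i_2$ have witness trees, and grafting these as the two children of a new root labeled by $i$ produces a witness tree for $i$.

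The only real subtlety, and hence the step I expect to require the most care, is the reformulation in the first paragraph: the stage-wise definition of $T$ via $S$ applies the rules of a specification sequentially to a growing pool, so the stage index alone does not stratify the fixed point in a way that directly supports the bottom-up induction for ``$\Rightarrow$'', since intervals created late in one application of $S$ may be built from intervals created earlier in the same application. Passing to the one-step closure operator $P_N \mapsto P_{N+1}$ (equivalently, performing a nested induction on the position within the rule list inside the induction on stages) removes this obstacle, after which everything reduces to routine structural and numerical induction. Finally, I would note explicitly that none of these arguments distinguishes between $\traceinterp{D}{\tau}$ and $\traceinterpk{D}{\tau}{k}$ — the bounded, modular semantics yields exactly the same closure characterization with the obvious adjustments — so the statement holds in both cases simultaneously.
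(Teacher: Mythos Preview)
Your proposal is correct and follows the same underlying approach as the paper. The paper does not give a formal proof of this proposition; it merely sketches the forward direction in the paragraph preceding the statement (an interval in the fixed point is either initial or obtained from two intervals already in the fixed point, so one can unfold a witness tree) and leaves the converse implicit. Your write-up is a faithful and more rigorous expansion of that sketch: the reformulation via the one-step closure operator~$P_N$ is exactly what is needed to make the paper's informal ``either initial or obtained from two earlier intervals'' precise, and your observation about the sequential nature of $S$ is the right way to bridge the gap between the paper's stage-wise semantics and the rule-by-rule closure that supports the induction.
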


We continue by settling the case of specifications with cycles, but restricted to finite data.

\begin{theorem}
\label{thm:inc_finite}
The \infer evaluation problem with finite data is \exptime-complete.
\end{theorem}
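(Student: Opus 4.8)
The plan is to prove membership in $\exptime$ and $\exptime$-hardness separately; only the hardness is genuinely new, since the upper bound follows the witness-tree strategy set up before Proposition~\ref{prop:witnesstrees}.

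\emph{Upper bound.} Fix a specification~$D$, a trace~$\tau$, and a bound~$k$ given in binary. The first observation is that every interval that can occur in $\traceinterpk{D}{\tau}{k}$ has a polynomial-size description: its identifier is one of polynomially many, its two timestamps are among the at most linearly many timestamps appearing in $\tau$ (no clock predicate ever introduces a fresh timestamp), and its map assigns to each of polynomially many identifiers a value in $\set{0,\ldots,k-1}\cup\boolty$, which needs only polynomially many bits since $k$ is encoded in binary. Hence there are at most $2^{q}$ distinct such intervals for some polynomial~$q$. By Proposition~\ref{prop:witnesstrees}, a $\target$-labeled interval lies in $\traceinterpk{D}{\tau}{k}$ iff one has a witness tree, and I would search for such a tree with an alternating algorithm running in polynomial space: existentially guess a $\target$-labeled root interval and then, at a node labeled~$i$, either confirm $i\in init(\tau)$, or existentially guess a rule~$\delta\in D$ and two intervals $i_1,i_2$, verify in polynomial time that applying $\delta$ to $i_1,i_2$ yields~$i$ (only a map predicate and a map update, all arithmetic performed modulo~$k$, must be evaluated), and then universally continue the check on $i_1$ and on $i_2$. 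A counter bounded by $2^{q}$, hence of polynomial bit length, is incremented along each path to enforce termination; soundness of this cutoff is the no-repetition remark preceding Proposition~\ref{prop:witnesstrees}. The algorithm uses polynomial space, so the problem is in $\apspace=\exptime$. (A direct saturation computation of $\traceinterpk{D}{\tau}{k}$ works just as well: each of the at most $2^{q}$ rounds is polynomial in the at most $2^{q}$ intervals.)

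\emph{Lower bound.} For hardness I would reduce from acceptance of an alternating Turing machine~$\tm$ using space~$p(|w|)$ on input~$w$ — an $\exptime$-complete problem — where I may assume $\tm$ has binary branching and halts on every computation path. A configuration fits into $r(|w|)$ bits for a polynomial~$r$ (tape content, head position, state), so I encode a configuration~$c$ by a single map value and take the bound~$k=2^{r(|w|)}$; its binary representation is polynomial and, as $k$ is a power of two, arithmetic modulo~$k$ simply discards high-order bits and never disturbs the encoding. The rules split into three groups. (i) \emph{Generation}: polynomially many cycle-free rules build all $2^{r(|w|)}$ configurations as intervals~$(\mathrm{Gen},c)$ bitwise, starting from one initial event in~$\tau$ — the $i$-th stage produces $(\mathrm{Gen}_i,s)$ and $(\mathrm{Gen}_i,s+2^i)$ from $(\mathrm{Gen}_{i-1},s)$, which is legal since only exponentiation with a \emph{variable} exponent is forbidden. (ii) \emph{Base case}: a rule producing $(\mathrm{ACC},c)$ from $(\mathrm{Gen},c)$ whenever $c$ is a halting accepting configuration. (iii) \emph{Game rules}: for an existential configuration, a rule producing $(\mathrm{ACC},c)$ from $(\mathrm{Gen},c)$ and any $(\mathrm{ACC},c')$ with $c'$ a successor of $c$; for a universal configuration, since rule bodies reference only two intervals, first a rule producing an auxiliary interval $(\mathrm{HALF},c)$ from $(\mathrm{Gen},c)$ and $(\mathrm{ACC},c_1)$ with $c_1$ the first successor of $c$, and then a rule producing $(\mathrm{ACC},c)$ from $(\mathrm{HALF},c)$ and $(\mathrm{ACC},c_2)$ with $c_2$ the second successor. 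A final rule produces a $\target$-labeled interval from $(\mathrm{ACC},c)$ whenever $c$ is the initial configuration. The identifier $\mathrm{ACC}$ appears on both sides of the game rules, so the specification is not cycle-free; cf.\ Theorem~\ref{thm:inc_free_finite}. An induction on the finite game tree of $\tm$ (using that $\tm$ halts on all paths) shows that $(\mathrm{ACC},c)$ is in the fixed point iff $c$ is accepting; hence the $\target$-labeled interval is generated iff $\tm$ accepts~$w$. There are polynomially many rules, each of polynomial size, and all map values stay below~$k$, so the reduction is polynomial time.

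\emph{Main obstacle.} The crux is verifying that the predicates needed in group (iii) — ``$c'$ is a successor of $c$'', ``$c$ is an existential / universal / halting accepting configuration'', ``$c_1$ is the first successor of $c$'' — can be written as \emph{polynomial-size} map predicates in the permitted fragment. The way around it is to exploit that a single step of $\tm$ alters only a constant-size window of the configuration, and that individual bits of a map value can be read off by dividing by the fixed constants~$2^i$ (exponentially large but of polynomial bit length); a disjunction over the polynomially many possible head positions, together with polynomially many such bit tests combined with $\wedge,\vee,\neg$, yields a formula of polynomial size and avoids any variable-exponent exponentiation or recursion. Secondary points needing a line each are the two-interval restriction on rule bodies for universal configurations (handled by the $(\mathrm{HALF},\cdot)$ intermediate) and the observation that modular arithmetic with a power-of-two modulus is harmless for the encoding.
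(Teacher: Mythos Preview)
Your proposal is correct and follows essentially the same approach as the paper: an alternating polynomial-space witness-tree search for the upper bound, and a reduction from acceptance of an alternating polynomial-space Turing machine for the lower bound, with the same generate-then-propagate structure and the same two-step $\mathrm{ACC}$/$\mathrm{HALF}$ (the paper's $A$/$B$) handling of universal configurations. The one noteworthy difference is the configuration encoding: the paper stores a configuration as a triple~$(\ell, q, r^R)$ in base ten with the head implicitly at the boundary between $\ell$ and $r$, so that reading the scanned cell is $r^R \bmod 10$, writing and moving the head are constant-size arithmetic, and the successor relation needs no disjunction over head positions---this yields constant-size rules per transition rather than your polynomial-size ones, though both give polynomial reductions.
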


\begin{proof}
We first prove the lower bound by reducing from the word problem for alternating polynomial space Turing machines (see, e.g.,~\cite{BB} for detailed definitions). 
As $\exptime = \apspace$, this yields the desired lower bound.

Recall that an alternating Turing machine has both existential states and universal states and a run of such a machine is a tree labeled by configurations.
An accepting run is finite, labeled by the initial configuration in the root, every vertex labeled by an existential configuration (i.e., one with an existential state) has one child labeled by a successor configuration, every vertex labeled by a universal configuration has for all successor configurations a child labeled by it, and all leaves are labeled by accepting configurations.
We refer to paths through such a (tree-like) run as run-branches. 

So, fix an alternating polynomial space Turing machine~$\tm$, i.e., there is some polynomial~$p$ such that $\tm$ uses at most space~$p(|w|)$ when started on input~$w$.
Let us also fix some input~$w$ for $\tm$. 
We construct an instance of \infer that simulates a run of $\tm$ on $w$.
To simplify our construction, we make some assumptions (all without loss of generality):
\begin{itemize}
	\item The set~$Q$ of states of $\tm$ is of the form~$\set{1,2,\ldots, |Q|}$ and $1$ is the initial state.
	\item The tape alphabet~$\Gamma$ of $\tm$ is equal to $\set{0,1, \ldots, 9}$ and $0$ is the blank symbol.
	\item Every run of $\tm$ has only finite branches, i.e., $\tm$ terminates on every input. To this end, we assume the existence of a set of terminal states, which is split into accepting and rejecting ones.
	\item Every nonterminal configuration (one with a nonterminal state) has exactly two successor configurations. Such states are either existential or universal.
\end{itemize}
So, a configuration of $\tm$ is of the form~$\cleft q \cright$ with $q \in Q$ and $\cleft, \cright \in \Gamma^*$ such that $|\cleft|+|\cright|=p(|w|)$, with the convention that the head is on the first letter of $\cright$.

For $c \in \Gamma^*$, let $c^R$ denote the reverse of $c$.
Due to our assumption on $\Gamma$ we can treat $\cleft$ and $\cright^R$ as natural numbers encoded in base ten.
We uniquely identify a configuration~$\cleft q \cright$ by the triple~$(\cleft, q, \cright^R)$ of natural numbers. 
The initial configuration of $\tm$ on $w$ is encoded by the triple $(0,1,w^R)$ representing that the tape to the left of the head has only blanks, the machine is in the initial state~$1$, and $w$ is to the right of the head with the remaining cells of the tape being blank.

This encoding allows us to read the tape cell the head is currently pointing to, update the tape cell the head is pointing to, and move the head by simple arithmetic operations.
For example, whether the head points to a cell containing a $3$ is captured by $\cright^R \bmod 10$ being $3$, and writing a $7$ to the cell pointed to by the head is captured by adding $- (\cright^R \bmod 10) + 7 $ to $\cright^R$.
Finally, moving the head to, say, the right, is captured by multiplying $\cleft$ by $10$ and then adding $\cright^R \bmod 10$ to it, and then dividing $\cright^R$ by $10$ (which is done without remainder and therefore removes the last digit of $\cright^R$).
In the following, we use intervals of the form~$(A, \cleft, q, \cright^R)$ to encode the configuration $\cleft q \cright$ of $\tm$. 
Here, $A$ is some identifier and we disregard timestamps, as all intervals have the same start and end. 
Hence, $\cleft$, $ q$, and $\cright^R$ are three map values of the interval.

We now describe the rules simulating $\tm$ on $w$. 
We start with some fixed event that yields the interval~$(G, 0,1,w^R)$ encoding the initial configuration.
As described above, the computation of a successor configuration can be implemented using arithmetic operations. 
Thus, given the interval encoding the initial configuration, one can write rules (one for each transition of $\tm$) that generate the set of all configurations, encoded as intervals of the form~$(G, \cleft, q, \cright^R)$.
Furthermore, one can write a rule that produces the interval~$(A, \cleft, q, \cright^R)$ from every interval~$(G, \cleft, q, \cright^R)$ with an accepting $q$.

Now, we describe rules to compute the set of accepting configurations, i.e., the smallest set~$A$ of configurations that contains all those with an accepting terminal state, all existential ones that have a successor in $A$, and all universal ones that have both successors in $A$.
For every transition~$t$ from an existential state~$q$, there is a rule to produce the interval~$(A, \cleft, q, \cright^R)$ if the intervals~$(G, \cleft, q, \cright^R)$ and $(A, \cleft', q', {\cright^R}')$ already exist, where $(\cleft', q', {\cright^R}')$ encodes the configuration obtained by applying the transition~$t$ to the configuration encoded by $(\cleft, q, \cright^R)$. 
Thus, to declare an existential configuration as accepting at least one of its successor configurations has to be already declared as accepting. 

Now, let us consider universal configurations.
Due to our assumption, for every pair of a state and a tape symbol, there are exactly two transitions~$t_1$ and $t_2$ that are applicable.
There are two rules for this situation.
The first one produces the interval~$(B, \cleft, q, \cright^R)$ if the intervals~$(G, \cleft, q, \cright^R)$ and $(A, \cleft', q', {\cright^R}')$ already exist, where $(\cleft', q', {\cright^R}')$ encodes the configuration obtained by applying the transition~$t_1$ to the configuration encoded by $(\cleft, q, \cright^R)$. 
The second one produces the interval~$(A, \cleft, q, \cright^R)$ if the intervals~$(B, \cleft, q, \cright^R)$ and $(A, \cleft', q', {\cright^R}')$ already exist, where $(\cleft', q', {\cright^R}')$ encodes the configuration obtained by applying the transition~$t_2$ to the configuration encoded by $(\cleft, q, \cright^R)$. 
Thus, to declare a universal configuration as accepting both of its successor configurations have to be already declared as accepting. 

Finally, there is a rule producing an interval with identifier~$\target$ from the interval~$(A,0,1,w^R)$, indicating that the initial configuration is accepting. 
Thus, the fixed point contains an interval labeled by $\target$ iff $\tm$ accepts~$w$. 

It remains to show that the specification has the required properties.
It is of polynomial size and each rule has polynomial size (both measured in $|\tm| + |w|$). 
Further, all numbers used in the intervals are bounded by $\max\set{|Q|, 10^{p(|w|)}}$, whose binary representation is bounded polynomially in $|\tm| + |w|$.

Now, we prove the upper bound. 
We are given a specification~$D$, an input trace~$\tau$ of events, a $k \in \natty$ (given in binary), and a target label~$\target$ and have to determine whether the fixed point~$\traceinterpi[k]{D}{\tau}$ contains an interval labeled by $\target$.
We describe an alternating polynomial space Turing machine solving this problem by searching for a witness tree. $\apspace = \exptime$ yields the result.

To this end, we rely on the following properties.
\begin{enumerate}
	\item Every interval in $\traceinterpi[k]{D}{\tau}$ can be stored in polynomial space, as every value in its map can be stored using $\log k$ bits, and there are only linearly many such values (measured in $|D| + |\tau|$).
	\item There are only exponentially many intervals in $\traceinterpi[k]{D}{\tau}$, e.g., 
	\[b(D, \tau, k) = \iota \cdot t^2 \cdot k^{|D| + |\tau|} \le  \iota \cdot |\tau|^2 \cdot 2^{(\log k) (|D| + |\tau|)}\] is a crude upper bound. Here, $\iota$ is the number of identifiers appearing in $D$ and $\tau$ and $t$ is the number of timestamps in $\tau$ (recall that \infer does not create new timestamps).
	\item Given three intervals~$i, i_1, i_2$ and a rule~$\delta \in D$ one can determine in polynomial space whether $i$ is obtained by applying $\delta$ to $i_1$ and $i_2$.
\end{enumerate}
\begin{algorithm}[t]
\caption{Algorithm checking the existence of a witness tree}
\label{algo_cyclesfin}
\begin{algorithmic}[1]
\REQUIRE{Specification~$D$, trace~$\tau$, bound~$k$, target identifier~$\target$}
\STATE $n := 0$	
\STATE \textbf{nondeterministically guess} interval~$i$ labeled by $\target$
\WHILE{$n < b(D, \tau, k)$ \AND $i$ is not initial}
\STATE $n := n + 1$
\STATE \textbf{nondeterministically guess} intervals $i_1, i_2$ and $\delta \in D$ such that $i$ is obtained by applying $\delta$ to $i_1$ and $i_2$
\STATE \textbf{universally pick} $i := i_j$ for $j \in {1,2}$
\ENDWHILE
\STATE \textbf{if} $i$ is initial \textbf{then return} accept
\STATE \textbf{else} \textbf{return} reject
\end{algorithmic}
\end{algorithm}

Using alternation, Algorithm~\ref{algo_cyclesfin} determines whether a witness tree exists whose root is labeled by $\target$ and whose height is bounded by $b(D, \tau, k)$.
Due to Proposition~\ref{prop:witnesstrees}, this is equivalent to an interval labeled by $\target$ being in $\traceinterpi[k]{D}{\tau}$.
Due to the above properties, one can easily implement the algorithm on an alternating polynomial space Turing machine, yielding the desired upper bound.
\end{proof}

Finally, we consider the last fragment: cycle-free specifications with infinite data. 
A crucial aspect here is that cycle-free specifications imply an upper bound on the map values of intervals in the fixed point, as each interval in the fixed point can be generated by applying each rule at most once.
For the lower bound, we generate \emph{large} numbers using a set of cycle-free rules (cf.\ Example~\ref{example:intro}) and encode configurations using these numbers as before. 

\begin{theorem}
\label{thm:inc_free_infinite}
The cycle-free \infer evaluation problem with infinite data is \nexptime-complete.
\end{theorem}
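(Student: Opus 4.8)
\emph{Upper bound.} The plan is to use the witness-tree characterisation of Proposition~\ref{prop:witnesstrees}: an interval lies in $\traceinterp{D}{\tau}$ iff it has a witness tree, and since $D$ is cycle-free such a tree has height at most $|D|$ and hence at most $2^{|D|+1}$ nodes. The only extra ingredient needed is a bound on the numbers that can appear in the maps of intervals of the fixed point. I would argue that a single map update assembled from $s$ arithmetic operations, applied to inputs whose values have bit-length at most $\beta$, produces a value of bit-length at most $(s+1)\cdot\max\{\beta,\gamma\}$, where $\gamma$ bounds the bit-length of the constants written into the rule (division and modulo only shrink, addition adds one bit, multiplication adds bit-lengths). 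Since $s$, $\gamma$, and $|D|$ are polynomial in the input size and a witness tree has at most $|D|$ levels, every interval occurring in a witness tree has map values of bit-length $2^{\mathrm{poly}}$. Consequently a witness tree for a $\target$-labeled interval has an encoding of exponential size, so a nondeterministic machine can guess it in exponential time and then verify, in exponential time, that its root is $\target$-labeled, that its leaves lie in $init(\tau)$, and that every internal node is obtained from its two children by some rule of $D$ (each check being arithmetic on numbers of exponential bit-length). This puts the problem in \nexptime.

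\emph{Lower bound.} I would reduce from the word problem of a nondeterministic Turing machine~$\tm$ running in time~$2^{p(n)}$, which is \nexptime-complete. As in the proofs of Theorems~\ref{thm:full} and~\ref{thm:inc_finite}, I encode a configuration~$\cleft q \cright$ by a triple~$(\cleft,q,\cright^R)$ of base-$|\Gamma|$ numbers, so that one (nondeterministic) computation step is realised by a constant-size rule doing a handful of arithmetic operations; for each state/head-symbol pair there are at most two such rules, one per successor. The obstruction is that a run has up to $2^{p(n)}$ steps and the numbers involved have up to $2^{p(n)}$ digits, whereas cycle-freeness forbids iterating a ``one-step'' rule. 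I would overcome both with repeated-squaring-style constructions. First, mirroring Example~\ref{example:intro}, a chain of squaring rules produces, for each $k\le p(n)$, an interval holding the constant~$|\Gamma|^{2^{k}}$; using these, bottom-up concatenation rules (multiply a length-$2^{k-1}$ fragment by $|\Gamma|^{2^{k-1}}$, then add another length-$2^{k-1}$ fragment) produce intervals encoding \emph{every} base-$|\Gamma|$ number with at most $2^{p(n)}$ digits, hence intervals encoding every configuration of $\tm$. Second, a reachability relation is built by doubling: a ``one-step'' rule produces from every configuration interval the pairs~$(c,c')$ with $c'$ a successor of $c$, and for $k=1,\dots,p(n)$ a join rule produces $(c,c')$ whenever $(c,c'')$ and $(c'',c')$ have already been produced at level~$k-1$, the map predicate checking that the two copies of~$c''$ agree. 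Making accepting configurations self-looping and terminal rejecting ones stuck, a final rule emits $\target$ from any level-$p(n)$ pair whose first component is the initial configuration and whose second component carries an accepting state. These rules form a directed acyclic dependency graph, there are polynomially many of them, each of polynomial size, and they manipulate only numbers with at most $2^{p(n)}$ digits, so this is a polynomial-time reduction that generates $\target$ iff $\tm$ accepts its input.

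\emph{Main obstacle.} The delicate part is the lower bound, where one must simultaneously generate exponentially long numbers and all configuration intervals, simulate exponentially many steps, and keep the dependency graph acyclic, all while every rule matches at most two intervals and has polynomial size. This last constraint is what forces the squaring and concatenation constructions to be split into several chained rules per level (one to multiply a fragment by the appropriate power of $|\Gamma|$, a separate one to add the low-order fragment, etc.), and genuine care is needed to check that the resulting dependency graph is acyclic and that the intervals generated at level~$k$ are exactly the numbers with $2^{k}$ digits, so that the doubling of reachability bottoms out on \emph{all} configurations. By comparison, the upper bound is routine once the $2^{\mathrm{poly}}$ bit-length bound on map values is established.
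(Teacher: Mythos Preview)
Your proposal is correct and follows essentially the same approach as the paper: the upper bound guesses and verifies an exponential-size witness tree using the cycle-freeness bound on height together with a doubly-exponential bound on map values, and the lower bound reduces from the acceptance problem for nondeterministic exponential-time Turing machines via repeated squaring to generate all configuration-encoding numbers and a doubling construction to compose an exponentially long run. The only cosmetic differences are that the paper generates all numbers up to the bound with a single rule schema~$(N_j, s^2+s')$ rather than your two-phase constants-then-concatenation construction, and that it threads explicit time indices through its run infixes where you instead rely on self-loops at accepting configurations; both variants realise the same idea.
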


\begin{proof}
We begin with the lower bound and reduce from the word problem for nondeterministic exponential-time Turing machines, which is \nexptime-hard.
Thus, fix such a machine~$\tm$, i.e., there is some polynomial~$p$ such that $\tm$ uses at most time~$2^{p(|w|)}$ when started on input~$w$.
Let us also fix some input~$w$ for $\tm$ and define $n = p(|w|)$.
We construct an instance of \infer that simulates the run of $\tm$ on $w$.
To simplify our construction, we make some assumptions (all without loss of generality):
\begin{itemize}
	\item The set~$Q$ of states of $\tm$ is of the form~$\set{1,2,\ldots, |Q|}$ and $1$ is the initial state.
	\item The tape alphabet~$\Gamma$ of $\tm$ is equal to $\set{0,1, \ldots, 9}$ and $0$ is the blank symbol.
	\item Accepting states are equipped with a self-loop. This allows us to restrict our attention to runs of length~$2^n$.
\end{itemize}
A configuration of $\tm$ is of the form~$\cleft q \cright$ with $q \in Q$ and $\cleft, \cright \in \Gamma^*$ such that $|\cleft|+|\cright|\le 2^n$, with the convention that the head of $\tm$ is on the first letter of $\cright$.
So, all the numbers encoding configurations are bounded by $b = 2^{2^{n+2}} \ge 10^{2^{n}}$.

First we describe how to generate all natural numbers up to $b$ using cycle-free rules (as before, we drop the timestamps and only write the identifier and map values).
This construction is based on the fact that squaring two $k$ times yields $2^{2^k}$ (cf.~Example~\ref{example:intro}):
We start with an initial interval~$(N_0, 1)$ (coming from the input trace) and have, for every $1 \le j \le n+2$, a rule producing the interval~$(N_{j},s^2 + s')$ from the intervals $(N_{j-1}, s)$ and $(N_{j-1}, s')$.
An induction shows that applying these rules until a fixed point is reached generates all intervals~$(N_{n+2},s)$ with $1 \le s \le b$.
Thus, we can also write a rule generating all intervals of the form~$(C, \cleft, q, \cright^R)$ with $\cleft + \cright \le b$.
Thus, each $C$-labeled interval encodes a configuration of $\tm$.

Now, we piece together a run out of configurations as follows:
We have a rule producing the interval~$(C_{0}, \cleft, q, \cright^R, \cleft, q, \cright^R)$ from an interval of the form~$(C, \cleft, q, \cright^R)$.
The resulting interval represents a run infix of length~$1 = 2^0$.

For all $0 < j \le n$ we have a rule producing the interval
\[
(C_j, \cleft_0, q_0, \cright^R_0, \cleft_3, q_3, \cright^R_3) \] 
from two intervals of the form
\[(C_{j-1}, \cleft_0, q, \cright^R_0, \cleft_1, q_1, \cright^R_1)\]
and
\[ (C_{j-1}, \cleft_2, q_2, \cright^R_2, \cleft_3, q_3, \cright^R_3)\] 
such that $\cleft_2 q_2 \cright_2$ encodes a successor configuration of the configuration encoded by $\cleft_1 q_1 \cright_1$.
These rules allow to combine two run infixes of length~$2^{j-1}$ into a one of length~$2^j$, provided the second one starts with a successor configuration of the first one.
As we can double the length of the run infix with each rule application, a linear number of rules suffices to construct a full run of exponential length.

Thus, we have a rule producing the interval~$(\target)$ from an interval of the form 
\[(C_0, \cleft_0, q_0, \cright^R_0, \cleft_1, q_1, \cright^R_1)\] such that $(\cleft_0, q_0, \cright^R_0)$ encodes the initial configuration of $\tm$ on $w$ and $q_1$ is an accepting state.
An induction shows that an interval labeled with $\target$ is generated iff $\tm$ has an accepting run on $w$.

For the matching upper bound, we show that a witness tree for an interval can be found in nondeterministic exponential time.
This is the based on the following properties for a given specification~$D$ and a given trace~$\tau$ of events.
As $D$ is cycle-free, every interval in $\traceinterpi{D}{\tau}$ has a witness tree whose height is bounded by $|D|$.
Now, if an interval~$i$ is obtained from intervals~$i_1,i_2$ by the application of some rule~$\delta \in D$, the maximal map value in $i$ is only polynomial larger than the maximal map value in $i_1$ and $i_2$, with the polynomial only depending on $\delta$.
This is because map updates are limited to simple arithmetic operations including multiplication but excluding exponentiation.

Hence, there is some polynomial~$p$ (dependent on $D$ and $\tau$) such that all map values of intervals in $\traceinterpi{D}{\tau}$ are bounded by $2^{2^{p(|D| + |\tau|)}}$.
	Thus, every interval in $\traceinterpi{D}{\tau}$ can be stored (with map values encoded in binary) in exponential space (measured in the size of $D$ and $\tau$).

Furthermore, as the height of a witness tree (which is a binary tree) can be bounded by $|D|$, its size can be bounded by $2^{|D|}$.
Combining both bounds shows that each interval in $\traceinterpi{D}{\tau}$ has a witness tree that can be encoded using an exponential number of bits.
Hence, given $D$, $\tau$, and a target label~$\target$ we can guess a tree and verify that it is indeed a witness tree for some $\target$-labeled interval in exponential time. 
\end{proof}

\section{The Full nfer Language}
\label{sec:exclusive}
This section introduces the second type of \nfer rules, called \emph{exclusive rules}, that test for the existence of one interval and the absence of another interval matching constraints.
These rules were introduced in~\cite{kauffman2017inferring} and they, together with inclusive rules, complete the \nfer language.
We define the syntax of these rules, including mathematical functions to simplify the presentation, as follows:
\[\nferrule{\eta}{\eta_1}{\textbf{unless } \ominus}{\eta_2}{\Phi}{\Psi}\]

\noindent
where $\eta, \eta_1, \eta_2 \in \Ident$ are identifiers,
$\ominus \in \{\afterkw{},\followkw{},\containkw{}\}$ is a \emph{clock predicate} on two intervals (one for each of $\eta_1\text{ and }\eta_2$), and $\Phi$ and $\Psi$ are the same as in inclusive rules.
We say that an exclusive rule \emph{includes} $\eta_1$ and \emph{excludes} $\eta_2$.

\subsection{Semantics}

Exclusive rules share many features with inclusive rules but they require additions to the \infer{} semantics that were omitted in Section~\ref{sec:language} for brevity.
Notably, these changes to the semantics produce equivalent results when evaluating inclusive rules.
The following definition gives semantics to exclusive rules:
%
\begin{lstlisting}
       RuleSem[:eta <- eta1 unless ominus eta2 wherekw Phi mapkw Psi:]pi = 
            { i0 isin Interval : i1 isin pi :- Idof(i0) = eta && Idof(i1) = eta1 &&
                       Startof(i0) = Startof(i1) &&  Endof(i0) = Endof(i1) &&
                       Mapof(i0) = Psi(Mapof(i1),emptymap) && 
                       not (Exists i2 isin pi :- i2 != i1 && Idof(i2) = eta2  && 
                           ominus(i1,i2) && Phi(Mapof(i1),Mapof(i2)) ) }
\end{lstlisting}
\noindent
Like with inclusive rules, exclusive rules match intervals in the input pool~$\pi$ to produce a pool of new intervals.
The difference is that exclusive rules produce new intervals where one existing interval in $\pi$ matches the identifier~$\eta_1$ and no intervals exist that match the identifier~$\eta_2$ such that the clock predicate~$\ominus$ and the map predicate~$\Phi$ hold for the $\eta_1$-labeled and the $\eta_2$-labeled interval. 

\begin{remark}
This definition is more general than the exclusive rule semantics in~\cite{kauffman2017inferring} where intervals with excluded identifiers were limited to those with end timestamps strictly less than the extant interval (\isem{Endof(i2) < Endof(i1)}).
Here, we instead enforce only that an interval does not exclude itself (\isem{i2 != i1}).
The restriction on end timestamps was originally introduced to facilitate completeness in an online monitoring algorithm but is more restrictive than necessary for the offline semantics we present here.
\end{remark}

The three possibilities referenced by $\ominus$ are shown in Figure~\ref{fig:exclusiveops} and formally defined in Table~\ref{table:exclusiveops}.
These clock predicates relate two intervals using familiar \ac{atl} temporal operators while the timestamps of the produced interval are copied from the included interval rather than being defined by the clock predicate.
In the figure, the excluded interval labeled $C$ is shown as a rectangle with a dotted outline and the produced interval labeled $A$ is always the same as the included interval labeled $B$.
For example, given intervals $i,i_{1},i_{2}$ where ${\Idof{i} = A}$, ${\Idof{i_{1}} = B}$ and ${\Idof{i_{2}} = C}$, ${A \leftarrow \follow{B}{C}}$ holds when ${\Endof{i_{2}} = \Startof{i_{1}}}$, ${\Startof{i} = \Startof{i_1}}$, and ${\Endof{i} = \Endof{i_1}}$.

\begin{figure}[]
\centering
\begin{tabular}{c|c|c}
\allen{unless after} & \allen{unless follow} & \allen{unless contain}\\
\includegraphics[height=1cm]{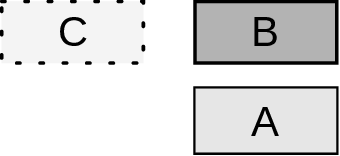} & \includegraphics[height=1cm]{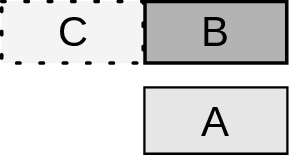} & \includegraphics[height=1.4cm]{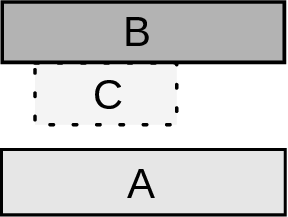}
\end{tabular}
\caption{\nfer clock predicates for exclusive rules}
\label{fig:exclusiveops}
\end{figure}

\begin{table}[]
  \caption{Formal definition of \nfer clock predicates for exclusive rules}
  \label{table:exclusiveops}
  \centering
  \begin{tabular}{llll}
  \toprule
  Syntax & \phantom{} & Definition of $\ominus(i_1,i_2)$ & \phantom{mmmm}\\
  \midrule

  \afterkw{} & \phantom{} & $\Startof{i_1} > \Endof{i_2} $ & \phantom{mmmm}\\
 \rowcolor{gray!30} \followkw{} & \phantom{} & $\Startof{i_1} = \Endof{i_2} $ & \phantom{mmmm}\\
  \containkw{} & \phantom{} & $\Startof{i_2} \geq \Startof{i_1}  \wedge \Endof{i_2} \leq \Endof{i_1}$ & \phantom{mmmm}\\
  \end{tabular}
\end{table}

To ensure that evaluating an \nfer specification yields a unique fixed point, exclusive rules may not appear in cycles because the intervals they produce depend on the persistent non-existence of other intervals.
When cycles exist in an \infer specification, rules are evaluated multiple times and each evaluation may add intervals.
Exclusive rules may have non-deterministic behavior in a cycle because the intervals they exclude may be produced either before or after the exclusive rule is evaluated.
Note that it is possible to evaluate specifications with both cycles \emph{among inclusive rules} and exclusive rules (not appearing in the cycles) with deterministic behavior and in Section~\ref{sec:extended} we introduce an extended semantics that supports this case.
Here, we first consider the case of cycle-free specifications with exclusive rules.

The order in which rules are evaluated may also affect the result of applying exclusive rules for this reason, which motivates a generalization of the $\traceinterpi{\_}{}$ (resp. $\traceinterpi[k]{\_}{}$) function.
\begin{align*}
&T\!{\text{\tiny{full}}}\;[\![\_]\!]\ :\ \rulety^* \rightarrow \poolty \rightarrow \poolty\\
&T\!{\text{\tiny{full}}}\;[\![\ \delta_1\, \cdots \delta_n\ ]\!]\ \pi = \begin{cases}
  S\;[\![\ \textit{topsort}(\delta_1\, \cdots \delta_n)\ ]\!]\ (\,\pi\,) & \textbf{if } \exists\ \textit{topsort}(\delta_1\, \cdots \delta_n)\\
  T\!{\text{\tiny{inc}}}\;[\![\ \delta_1\, \cdots \delta_n\ ]\!]\ (\,\pi\,) & \textbf{otherwise}\\
  \end{cases}
\end{align*}
where \isem{topsort} is a topological sort of the directed graph $G(D)$ described in Section~\ref{sec:complexity} and $\traceinterpi{\_}{}$ is the interpretation function defined in Section~\ref{paragraph:semantics}.
A topological sort, which can be computed in linear time~\cite{kahn1962topological}, only exists in a cycle-free specification.
In that case, \isem{topsort} orders the rules such that the fixed-point computation of $\traceinterpi{\_}{}$ can be short-circuited, since one application of $\seminterp{\_}$ is sufficient to produce the final pool.
The results of $\traceinterpf{\_}{}$ are independent of the topological sort, as any such ordering will guarantee that all intervals matched by a rule exist before it is applied using $\interp{\_}$.


%
\begin{example}\label{example:complement}
From the trace~$(N, \set{v \mapsto 0})\cdots (N,  \set{v \mapsto 100})$, the rule
	\[\scalebox{.9}{$
\coinciderule{E}{N}{{N\!}}{\!{m_1,m_2 \mapsto m_1(v) \bmod 2 = 0\!}}{{\!m_1,m_2 \mapsto\!\{v \mapsto m_1(v)\}}}
$}\]
produces the set~$\set{(E, \set{v\mapsto n}) \mid n \le 100 \text{ is even}}$. Here, $m_i$ for $i \in \set{1,2}$ is the map of the interval matched with the $i$-th $N$ on the right-hand side of the rule.
Then, the (exclusive) rule
\[\scalebox{.9}{$
\nferrule{O}{N}{\unlesskw\ \containkw}{E}{\!{m_1,m_2 \mapsto m_1(v) = m_2(v)\!}}{m_1\mapsto \set{v \mapsto m_1(v)}}
$}\]
yields the set~$\set{(N, \set{v \mapsto n}) \mid n \le 100 \text{ is odd}}$. Here, $m_1$ is the map of the interval matched with $N$ while $m_2$ is the interval matched with $E$.
So, the exclusive rule intuitively describes the complement of the $E$-labeled intervals with respect to the $N$-labeled intervals (which are all identifiable by numbers~$0 \le n \le 100$).
\end{example}

\section{Complexity Results for Exclusive nfer}

In the following, we study the complexity of the cycle-free \nfer evaluation problem with finite and infinite data, starting with the former.

\begin{theorem}
\label{thm:excl_free_finite}
The cycle-free \nfer evaluation problem with finite data is \pspace-complete.
\end{theorem}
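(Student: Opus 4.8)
Proof proposal. The lower bound requires no new work: the reduction from TQBF in the proof of Theorem~\ref{thm:inc_free_finite} already produces a \emph{cycle-free} specification that uses only inclusive rules and finite data, so it is in particular a cycle-free \nfer instance with finite data, and \pspace-hardness follows. The content of the theorem is therefore the upper bound, which also discharges the promise made in the proof of Theorem~\ref{thm:inc_free_finite} for cycle-free \infer with finite data.

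For the upper bound I would describe a recursive procedure $\textsc{Exists}(i)$ that, given an interval~$i$ (with all arithmetic taken modulo the binary bound~$k$), decides whether $i \in \traceinterpk{D}{\tau}{k}$; the evaluation problem is then answered by running $\textsc{Exists}$ on every candidate $\target$-labeled interval, which can be enumerated with a counter of polynomially many bits. The correctness of the recursion rests on the single-pass semantics of cycle-free specifications: since $\traceinterpk{D}{\tau}{k}$ is obtained as $\seminterp{\textit{topsort}(D)}{\textit{init}(\tau)}$ and $\textit{topsort}$ orders the rules so that every rule fires only after all intervals matching its right-hand side have already been produced, the pool present when a rule~$\delta$ fires, restricted to the identifiers occurring on the right-hand side of~$\delta$, equals $\traceinterpk{D}{\tau}{k}$ restricted to those identifiers. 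Consequently, $i$ lies in the fixed point iff $i \in \textit{init}(\tau)$, or some inclusive rule produces $i$ from two fixed-point intervals $i_1,i_2$, or some exclusive rule produces $i$ from a fixed-point interval~$i_1$ such that \emph{no} fixed-point interval $i_2$ with the excluded identifier satisfies the clock and map predicates. The procedure $\textsc{Exists}(i)$ searches for such a derivation: for an inclusive rule it loops over candidate pairs $(i_1,i_2)$ and recurses on both; for an exclusive rule it loops over candidates~$i_1$, recurses on~$i_1$, and then loops over all candidates~$i_2$ with the excluded identifier, rejecting that rule as soon as some call $\textsc{Exists}(i_2)$ succeeds.

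For the space analysis I would verify three points. First, every interval in the fixed point admits a polynomial-size description: \infer never creates new timestamps, so the two timestamps come from~$\tau$, and the linearly many map values each lie in $\set{0,\ldots,k-1}$ and hence use $\log k$ bits; moreover the clock and map predicates can be evaluated on such intervals in polynomial space. Second, the recursion depth is polynomial: a nested call passes from an interval labeled~$\eta$ to an interval whose label appears on the right-hand side of a rule that has~$\eta$ on its left-hand side, so a recursion chain of length~$d$ induces a directed path of length~$d$ in $G(D)$; since $D$ is cycle-free, $G(D)$ is a DAG and $d \le |D|$ — and this bound also governs the \emph{nested} non-existence checks triggered by exclusive rules, which is precisely why those rules cause no blow-up here. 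Third, each activation record stores only the current interval, two candidate intervals, a constant number of flags, and enumeration counters of polynomially many bits, so the total space used is polynomial; by analogy with the proof of Theorem~\ref{thm:inc_finite} one may alternatively phrase $\textsc{Exists}$ as an alternating polynomial-\emph{time} procedure and invoke $\aptime = \pspace$.

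The step I expect to be the main obstacle is the treatment of negation in exclusive rules, both in making the search fit in polynomial space and in arguing its correctness. The witness-tree framework of Proposition~\ref{prop:witnesstrees} does not extend directly, because an application of an exclusive rule is certified not by a finite subtree but by a universal claim about the entire fixed point. Two facts make the recursion go through: (i) cycle-freeness forces any rule that could refute such a universal claim to lie strictly below~$\delta$ in $G(D)$, so the alternation between ``exists'' and ``does not exist'' phases still has depth at most~$|D|$; and (ii) the topological-order single-pass semantics lets us replace ``the pool at the moment $\delta$ fires'' by ``the final fixed point restricted to the relevant identifiers'' without circularity — the very property that fails, and rightly so, for specifications with cycles.
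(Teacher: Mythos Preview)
Your proposal is correct. The lower bound matches the paper exactly. For the upper bound, both you and the paper rely on the same two facts: intervals have polynomial-bit descriptions when data are bounded by~$k$, and cycle-freeness bounds the derivation depth by~$|D|$ even across the nested non-existence checks triggered by exclusive rules.

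The execution differs. You give a deterministic polynomial-space procedure: $\textsc{Exists}$ enumerates the (exponentially many) candidate intervals~$i_1,i_2$ with polynomial-bit counters and recurses, so each stack frame is polynomial and the stack has depth at most~$|D|$. The paper instead gives an alternating polynomial-\emph{time} algorithm and invokes $\aptime = \pspace$: it guesses rules and witnessing intervals nondeterministically, branches universally over the children and over the excluded~$i_2$, and maintains a parity flag~$f$ that swaps existential and universal roles whenever the search descends into a non-existence subproblem. Your route is more elementary since it avoids the alternation characterization of~\pspace; the paper's parity-flag formulation, however, carries over directly to Theorem~\ref{thm:excl_free_infinite}, where counting alternations (at most one per exclusive rule along a branch) is what yields the~\aexptimepoly bound.
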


\begin{proof}
The lower bound already holds for the special case of \infer (see Theorem~\ref{thm:inc_free_finite}), so we only need to prove the upper bound. 
To this end, we show how to witness in alternating polynomial time that a given interval is in the fixed point, which yields the desired bound due to $\aptime = \pspace$.
Note that we cannot just search for a witness tree as for \infer, as we also have to handle exclusive rules.

Intuitively, an exclusive rule requires the existence of one interval in the fixed point and the non-existence of other intervals in the fixed point. 
We have seen how to capture existence of an interval via the existence of a witness tree. 
Hence, we can capture the non-existence of an interval via the non-existence of a witness tree.
As we construct an alternating algorithm, we use duality to capture the non-existence of a witness tree and switch between an existential and a universal mode every time the non-existence of an interval is to be checked.

\begin{algorithm}
\caption{Solving the cycle-free \nfer evaluation problem with finite data}
\label{algo_cyclesfinexcl}
\small
\begin{algorithmic}[1]
\REQUIRE{Specification~$D$, trace~$\tau$, target identifier~$\target$}
\STATE $n := 0$, $f := 0$
\STATE \textbf{nondeterministically guess} interval~$i$ labeled by $\target$
\WHILE{$n < |D|$ \AND $i$ is not initial}
\STATE $n := n + 1$
\IF{$f = 0$}
\STATE \textbf{nondeterministically guess} rule $\delta \in D$ 
\IF{$\delta$ is inclusive}
\STATE \textbf{nondeterministically guess} intervals $i_1, i_2$ such that $i$ is obtained by applying $\delta$ to $i_1$ and $i_2$
\STATE \textbf{universally pick} $i := i_j$ for $j \in {1,2}$
\ELSE[$\delta$ is exclusive]
\STATE \textbf{nondeterministically guess} interval $i_1$ such that $i$ is obtained by applying $\delta$ to $i_1$
\STATE \textbf{universally pick} an interval $i_2 \in \{ i' : \delta \text{ includes } i_1 \text{ and excludes } i' \} \cup \{\bot\}$
\IF{$i_2 = \bot$}
\STATE $i := i_1$
\ELSE
\STATE \textbf{universally pick} $i := i_j$ for $j \in \set{1,2}$
\IF{$i = i_2$}
\STATE $f := 1 - f$
\ENDIF
\ENDIF
\ENDIF 
\ELSE[$f = 1$]
\STATE \textbf{universally pick} rule $\delta \in D$ 
\IF{$\delta$ is inclusive}
\STATE \textbf{universally pick} intervals $i_1, i_2$ such that $i$ is obtained by applying $\delta$ to $i_1$ and $i_2$
\STATE \textbf{nondeterministically guess} $i := i_j$ for some $j \in {1,2}$
\ELSE[$\delta$ is exclusive]
\STATE \textbf{universally pick} interval $i_1$ such that $i$ is obtained by applying $\delta$ to $i_1$
\STATE \textbf{nondeterministically guess} an interval $i_2 \in \{ i' : \delta \text{ includes } i_1 \text{ and }$ $\text{excludes } i' \} \cup \{\bot\}$
\IF{$i_2 = \bot$}
\STATE $i := i_1$
\ELSE
\STATE \textbf{nondeterministically guess} $i := i_j$ for $j \in \set{1,2}$
\IF{$i = i_2$}
\STATE $f := 1 - f$
\ENDIF
\ENDIF
\ENDIF
\ENDIF
\ENDWHILE
\IF{$i$ is initial \AND $f = 0$}
\RETURN{accept}
\ELSE
\RETURN{reject}
\ENDIF
\end{algorithmic}
\end{algorithm}

Algorithm~\ref{algo_cyclesfinexcl} keeps track of a single interval and applies rules in a backwards fashion. 
Using alternation, it guesses and verifies a tree structure witnessing the (non-)existence of intervals in the fixed point.
To simulate exclusive rules, it uses a Boolean flag~$f$ to keep track of the parity of the number of exclusive rules that have been simulated, initialized with zero.
If $f$ is zero, then a rule~$\delta$ is guessed nondeterministically.
If this rule is inclusive, two intervals~$i_1$ and $i_2$ are guessed nondeterministically such that the current interval~$i$ is obtained from $i_1$ and $i_2$ by applying $\delta$.
Then, the current interval is updated by universally picking $i := i_1$ or $i := i_2$, so that both choices are checked. 
This case is similar to Algorithm~\ref{algo_cyclesfin}.

On the other hand, if the rule is exclusive, then a single interval~$i_1$ is guessed nondeterministically and another interval~$i_2$ is picked universally so that $\delta$ includes $i_1$, excludes $i_2$, and $i$ is the result of applying $\delta$ to $i_1$.
Now, the current interval is updated by universally picking $i := i_1$ or $i := i_2$, so that both choices are checked.
In the second case, the flag is toggled to signify that another exclusive rule is simulated.

In the case where $f$ is equal to one, the approach is just dual, i.e., we switch existential and universal choices.
As the input specification is cycle-free, we need to simulate at most $|D|$ applications of a rule.
Finally, acceptance depends on the value of the flag, i.e., while the flag is zero the last interval has to be initial (i.e., in the input trace) while it has to be non-initial if the flag is one.  

The algorithm runs in alternating polynomial time as each run simulates at most $|D|$ rule applications and each application can be implemented in deterministic polynomial time due to the encodings of the map values and time stamps being bounded by $|D|+|\tau|+\log(k)$, where $k$ is the bound on the map values.
\end{proof}

Finally, we consider the case of infinite data.
Here, the bound we obtain is \aexptimepoly, the class of problems decided by alternating exponential-time Turing machines with a polynomial number of alternations between existential and universal states.

\begin{theorem}
\label{thm:excl_free_infinite}
The cycle-free \nfer evaluation problem with infinite data is \aexptimepoly-complete.
\end{theorem}

\begin{proof}
We begin with the lower bound, showing a reduction from the word problem for alternating exponential-time Turing machines with a polynomial number of alternations.
To this end, fix such a machine~$\tm$, i.e., there are two polynomials $p,p'$ such that $\tm$ uses at most time~$2^{p(|w|)}$ and at most $p'(|w|)$ alternations when started on input~$w$.
We fix an input~$w$ and define $n = p(|w|)$ and $m = p'(|w|)$.
We construct a cycle-free \nfer instance with infinite data that simulates a run of $\tm$ on $w$. 

Recall that in the proof of Theorem~\ref{thm:inc_free_infinite}, we have shown that a \emph{nondeterministic} exponential-time Turing machine can be simulated by a cycle-free \infer instance (assuming infinite data).
Here, we show that exclusive rules can simulate alternation.

As before, we make some simplifying assumptions on $\tm$ (all without loss of generality):
\begin{itemize}
	\item The set~$Q$ of states of $\tm$ is of the form~$\set{1,2, \ldots, |Q|}$ and $1$ is the initial state.
	\item The tape alphabet~$\Gamma$ of $\tm$ is equal to $\set{0,1,\ldots, 9}$ and $0$ is the blank symbol.
	\item All accepting states are existential and every branch of a run of $\tm$ on $w$ has exactly $m$ alternations.
\end{itemize}

As in the proof of Theorem~\ref{thm:inc_free_infinite}, we encode a configuration of $\tm$ as $\cleft q \cright^R$ where $q$ is the state and $\cleft, \cright^R \in \Gamma^*$ are treated as natural numbers bounded by $2^{2^{n+2}}$ encoding the tape to the left of the reading head ($\cleft$) and the reverse of the tape to the right of the reading head ($\cright^R$).

As shown in the proof of Theorem~\ref{thm:inc_free_infinite}, we can construct a set~$D$ of cycle-free inclusive rules and a trace~$\tau$ such that the fixed point~$\traceinterpi{D}{\tau}{}$ encodes all those pairs~$(c,c')$ of configurations of a nondeterministic Turing machine such that $c'$ is reachable from $c$ via a run of length at most~$2^n$.
As usual, here, all intervals have the same time stamps, which is the reason we drop them from the notation.
In this proof, we work with tree-like runs, as $\tm$ is an alternating Turing machine.
In the following, we need to distinguish between branches of such runs using only existential and only universal states.
To this end, we use the identifiers~$\exists$ and $\forall$. We adapt the rules from the proof of Theorem~\ref{thm:inc_free_finite} such that 
\begin{itemize}
	\item the interval~$(\exists, \cleft_0, q_0, \cright_0^R, \cleft_1, q_1, \cright_1^R)$ is in the fixed point iff the configuration~$\cleft_1 q_1 \cright_1$ is reachable from the configuration~$\cleft_0 q_0 \cright_0$ via a branch of length at most~$2^n$ such that all states (including $q_0$ and $q_1$) are existential, and
	\item the interval~$(\forall, \cleft_0, q_0, \cright_0^R, \cleft_1, q_1, \cright_1^R)$ is in the fixed point iff the configuration~$\cleft_1 q_1 \cright_1$ is reachable from the configuration~$\cleft_0 q_0 \cright_0$ via a branch of length at most~$2^n$ such that all states (including $q_0$ and $q_1$) are universal.
\end{itemize}

Recall that $m$ is the number of alternations during every branch of a run on our fixed input~$w$ and that all accepting states are existential. 
In the following, we show how to inductively \emph{compute} the set of configurations from which an accepting subrun of length at most $2^n$ and with $m' \le m$ alternations starts, using cycle-free rules only.

For $m' = 0$, this is the set of existential configurations that can reach an accepting one via existential configurations only.
Hence, we have a rule that turns an interval of the form
\[
(\exists, \cleft_0, q_0, \cright_0^R, \cleft_1, q_1, \cright_1^R)
\]
with accepting $q_1$ into the interval~$
(L_0, \cleft_0, q_0, \cright_0^R)$.

Now, using an exclusive rule we can obtain the intervals of the form
$
(\overline{L_0}, \cleft_0, q_0, \cright_0^R)
$
such that
$
(L_0, \cleft_0, q_0, \cright_0^R)$
has not been generated so far, i.e., we generate the complement (cf.\ Example~\ref{example:complement}).
This is the the set of existential configurations that cannot reach an accepting one via existential configurations only. 
Here, we first need to generate intervals encoding all (exponentially-sized) configurations of $\tm$, which can easily be achieved with cycle-free rules as described in the proof of Theorem~\ref{thm:inc_free_infinite}.

Next, we have a rule that turns an interval
\[
(\forall, \cleft_0, q_0, \cright_0^R, \cleft_1, q_1, \cright_1^R)
\]
and an interval~$
(\overline{L_0}, \cleft_2, q_2, \cright_2^R)
$
such that $\cleft_2 q_2 \cright_2$ is a successor of $\cleft_1 q_1 \cright_1$
into the interval
$
(\overline{L_1}, \cleft_0, q_0, \cright_0^R)$.
Thus, the intervals with the label~$\overline{L_1}$ encode those universal configurations that have a  branch that ends in an existential configuration from which no accepting configuration is reachable.
These are exactly the configurations where no accepting subrun with one alternation starts. 

Hence, another exclusive rule yields the intervals~$
(L_1, \cleft_0, q_0, \cright_0^R)
$
such that 
$
(\overline{L_1}, \cleft_0, q_0, \cright_0^R)
$
has not been not generated so far, i.e., exactly the configurations where an accepting subrun with one alternation starts.  

Now, we are in a dual situation: We have a rule that turns an interval
\[
(\exists, \cleft_0, q_0, \cright_0^R, \cleft_1, q_1, \cright_1^R)
\]
and an interval~$
(L_1, \cleft_2, q_2, \cright_2^R)
$
such that $\cleft_2, q_2, \cright_2$ is a successor of $\cleft_1 q_1, \cright_1$
into the interval~$
(L_2, \cleft_0, q_0, \cright_0^R)$.
Thus, the intervals with the label~$L_2$ encode those existential configurations that have a  branch that ends in an universal configuration from which an accepting subrun with one alternation starts.
These are exactly the configurations where an accepting subrun with two alternation starts. 

We repeat this construction with labels~$L_{m'}$ for every $m' \le m$.
Finally, there is a rule producing the target interval~$\eta_T$ from the interval~$(L_m, \cleft, q, \cright^R)$
if $\cleft q \cright$ is the initial configuration.
 
Thus, the fixed point contains the interval~$(\eta_T)$ iff $\tm$ accepts $w$.
Note that all the rules we have constructed are indeed cycle-free (just consider the order of appearance in the description above). To conclude the proof, it remains to note that there are only polynomially many rules, which are all of polynomial size.

For the upper bound, we again use the algorithm described in the proof of Theorem~\ref{thm:excl_free_finite}, with the exponential upper bound on the runtime relying on the upper bound on the map values as in Theorem~\ref{thm:inc_free_infinite}, which holds here as well.
Hence, we obtain an alternating algorithm with exponential running time (due to the fact the we need to perform calculations on exponentially-sized binary numbers), but with only linearly many (at most $|D|$) alternations. 
\end{proof}

\section{Extended nfer}
\label{sec:extended}


There is a further generalization of the semantics of \nfer that permits cycles in the same specification as exclusive rules so long as the exclusive rules are not part of a cycle.
This relaxation of the restrictions on the use of exclusive rules is desirable to support more use-cases for the language.
Since it is possible to obtain deterministic semantics for such specifications, an extended version of \nfer could support this.
This change to the semantics is also interesting for our purposes, since the evaluation complexity of \infer with cycles is higher than the cycle-free evaluation complexity of \nfer with exclusive rules.


\subsection{Semantics}

To support cycles and exclusive rules in the same specification, we modify $\traceinterpf{\_}{}$ to compute the strongly-connected components of the directed graph $G(D)$ described in Section~\ref{sec:complexity}.
Recall that this graph has a vertex for each rule and edges from rules with an identifier on their left-hand side to rules with the same identifier on their right-hand side.
The strongly connected components of the graph represent rules included in cycles (\emph{non-trivial} components) or individual rules outside of cycles (\emph{trivial} components).
We interpret the rules from each component as a specification with cycles and ensure deterministic behavior from exclusive rules by evaluating the components in topological sort order.


\begin{lstlisting}
  TraceSemE[:_:] : DeltaList -> Pool -> Pool
  TraceSemE[:delta1 ... deltaN:]pi = piSIZEl :- pi1 = pi && Comps = SCC(delta1 ... deltaN) &&
                      (D1,,,Dl)  = topsort(Comps) :- piJ = TraceSemF[:Di:](piI)
\end{lstlisting}
where $\text{SCC}(\delta_1 \cdots \delta_n)$ is the set $\mathcal{D}$ of strongly connected components of the directed graph $G(\delta_1 \cdots \delta_n)$ described in Section~\ref{sec:complexity} and $\textit{topsort}(\mathcal{D})$ is a topological sort of these components.
Here, $\traceinterpf{\_}{}$ is the interpretation function for full \nfer defined in Section~\ref{sec:exclusive}.
The results of $\traceinterpe{\_}{}$ are independent of the topological sort, as any such ordering will guarantee that all intervals matched by a sub-specification exist before it is evaluated using $\traceinterpf{\_}{}$.
Both the strongly-connected components and their topological sort can be computed in linear time~\cite{tarjan1972graph}.

\begin{example}
\label{example:primes}
We can now extend Example~\ref{example:evenodd} with inclusive and exclusive rules to compute the composite and prime numbers.
Recall the five \infer rules from that example to compute the natural (labeled with the identifier $N$), even ($E$), and odd ($O$) numbers from the input trace $\tau = (I,\set{v \mapsto 0})$.
We introduce two new rules that first compute the composite ($C$) numbers and then use them to find the primes ($P$):
\begin{align*}
\scalebox{.9}{$\delta_6 =$} &\ \scalebox{.9}{$\coinciderule{C}{N}{N}{\!{m_1,m_2 \mapsto m_1(v) > 1 \wedge m_2(v) > 1\!}}{\!{m_1,m_2 \mapsto \{ v \mapsto m_1(v) \cdot m_2(v) \}\!}}$}\\
\scalebox{.9}{$\delta_7 =$} &\ \scalebox{.9}{$\nferrule{P}{N}{\unlesskw\ \containkw}{C}{\!{m_1,m_2 \mapsto m_1(v) = m_2(v)\!}}{\!{m_1 \mapsto \{ v \mapsto m_1(v) \}\!}}$}.\\
\end{align*}
Rule $\delta_6$ matches two $N$-labeled intervals with map values greater than one and creates a $C$-labeled interval with its map value set to the product of the values from the two matched intervals.
Rule $\delta_7$ then finds the complement of the composite numbers defined in $\delta_6$ in a similar way to how an exclusive rule was used in Example~\ref{example:complement} to define odd numbers as the complement of even numbers.

Figure~\ref{fig:primes} shows the directed graph formed by \nfer rules $\delta_1 \cdots \delta_5$ from Example~\ref{example:evenodd} with $\delta_6$ and $\delta_7$ added.
In the graph, there are two non-trivial strongly-connected components: the cycles formed by $\delta_4$ and $\delta_5$ and the cycle formed by $\delta_3$ with itself.
The exclusive rule $\delta_7$ (shown as a square shape in the figure) is not included in these components and must only fall after after $\delta_1,\delta_3,$ and $\delta_6$ in any topological sort.
\begin{figure}[h]
  \centering
  \tikzset{>=stealth}
  \begin{tikzpicture}[shorten >=1pt,node distance=2cm,on grid,auto,ultra thick] 
   \node[state] (r1)  {$\delta_1$}; 
   \node[state] (r2)  [right=8cm of r1] {$\delta_2$};
   \node[state] (r3)  [right=3.5cm of r1] {$\delta_3$};
   \node[state] (r4)  [above right=1.5cm and 6cm of r1] {$\delta_4$};
   \node[state] (r5)  [below right=1.5cm and 6cm of r1] {$\delta_5$};
   \node[state] (r6)  [above left=1.5cm and 3cm of r1] {$\delta_6$};
   \node[rectangle,draw,minimum width=1cm,minimum height=1cm] (r7)  [below left=1.5cm and 3cm of r1] {$\delta_7$};
    \path[->] 
    (r1) edge node[above=0] {$N$} (r3)
         edge [bend left=30] node[above=0] {$N$} (r4)
         edge [bend right=30] node[above=0] {$N$} (r5)
         edge node[below] {$N$} (r6)
         edge node[above] {$N$} (r7)
    (r2) edge node[below=0] {$E$} (r5)
    (r3) edge [loop above=20] node[above=0] {$N$} ()
         edge node[above=0] {$N$} (r4)
         edge node[below=0] {$N$} (r5)
         edge [bend right=20] node[above] {$N$} (r6)
         edge [bend left=20] node[below] {$N$} (r7)
    (r4) edge [bend right] node {$E$} (r5)
    (r5) edge [bend right] node {$O$} (r4)
    (r6) edge node {$C$} (r7);
  \end{tikzpicture}
  \caption{The directed graph formed by the rules to compute composite and prime numbers}
  \label{fig:primes}
\end{figure}
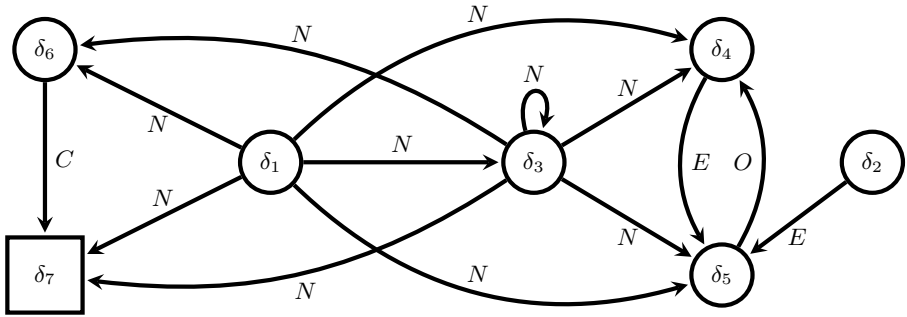
\end{example}


\section{Complexity Results for Extended nfer}

The \extnfer evaluation problem with infinite data is trivially undecidable, as already the \infer evaluation problem with infinite data is undecidable (see Theorem~\ref{thm:full}).
So, we only consider the case of finite data.

\begin{theorem}
\label{thm:extended_finite}
The \extnfer evaluation problem with finite data is $\exptime$-complete.
\end{theorem}

\begin{proof}
The lower bound already holds for the \infer evaluation problem with finite data (see Theorem~\ref{thm:inc_finite}), so we only need to prove the upper bound.

To this end, we again use Algorithm~\ref{algo_cyclesfinexcl}, but replace the bound~$|D|$ in Line~$3$ by $(|D|+1) \cdot b(D, \tau, k)$, where $b(D, \tau, k)$ is the upper bound on the number of intervals defined in the proof of Theorem~\ref{thm:inc_finite}.
Recall that this bound is also an upper bound on the height of a witness tree for an interval.

The correctness of the algorithm follows by combining the proofs of Theorem~\ref{thm:inc_finite} and Theorem~\ref{thm:excl_free_finite}: As exclusive rules do not appear on cycles, each one of them can be applied at most once when producing a fixed interval. 
Furthermore, between each such application no interval needs to be repeated in a witness tree. 
Altogether, the bound~$(|D|+1) \cdot b(D, \tau, k)$ is sufficient to determine whether a given interval is in the fixed point or not.

Finally, it is clear that the algorithm uses only polynomial space, as it only needs to store at most two intervals (and the resulting map values) and a counter bounded by $(|D|+1) \cdot b(D, \tau, k)$). 
All these objects can be encoded using polynomially many bits.
Thus, the upper bound on the complexity follows due to $\apspace = \exptime$.
\end{proof}

Finally, let us consider \extnfer combined with minimality. 
With finite data, the resulting evaluation problem is in $\ptime$ while it is in $\exptime$ with infinite data.
Both results can be proven using the same arguments as for \nfer (i.e., Theorem~\ref{thm:minimality_finite} and Theorem~\ref{thm:minimality_infinite}), as these rely on the number of possible intervals in the fixed point and the time required to compute their maps, but not the types of rules considered.

\section{Minimality}
\label{sec:minimality}
This section discusses the \emph{minimality} restriction and its implications on the complexity of the \nfer evaluation problem.
Traditionally, \nfer supports the concept of a \emph{selection function} that may modify the results of $\interp{\_}$~\cite{kauffman2017inferring}.
The reason is to support minimality, which filters any intervals that are not minimal in their timestamps.
Although minimality was originally introduced for its utility~\cite{kauffman2016nfer}, it has positive implications for evaluation complexity as well.
All results in this section hold for our extended semantics, i.e., with cycles and exclusive rules. 

Figure~\ref{fig:minimality} shows the effect of minimality on the evaluation of a single rule.
In the figure, time moves from left to right and events $B$ and $C$ on the timeline are the inputs to $\interp{\nferrule{A}{B}{\beforekw}{C}{\true}{\emptymap}}$.
This evaluation produces the three intervals labeled $A$ but minimality discards the longer, shaded interval because there are shorter $A$ intervals in the same period.

\begin{figure}[t]
  \centering
  \includegraphics[height=2cm]{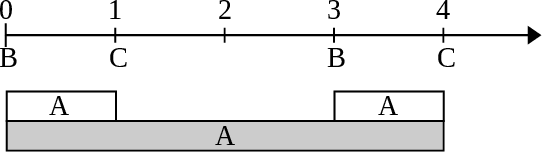}
  \caption{Minimality discards the shaded interval produced by $A \leftarrow \before{B}{C}$}
  \label{fig:minimality}
\end{figure}


Given a pool $\pool$ of existing intervals and a pool $\pool'$ of intervals to add, 
the $\minimality$ function returns only the minimal intervals in $\pool'$ that do not subsume any interval in $\pool$.
That is, the intervals where there is not another interval with the same identifier with a shorter duration during the same time.
No new intervals will be produced with the same identifier and timestamps when one already exists in $\pool$.
If there are multiple intervals with the same identifier and the same timestamps in $\pool'$, the one with the least map is retained (with respect to some fixed ordering of maps).
We define minimality as the following:

$\minimality : \poolty \times \poolty \rightarrow \poolty$\\
\indent $\minimality (\pool', \pool) = $\\
\begin{tabular}{p{1em}p{15mm}p{18mm}p{24em}}
  & $\{ i \in \pool' :$ & $\nexists i_1 \in \pool \where$ & $\Idof{i} = \Idof{i_1} \wedge \Startof{i} \leq \Startof{i_1} \wedge \Endof{i_1} \leq \Endof{i} \}\ \cap$\\
  & $\{ i \in \pool' :$ & $\nexists i_2 \in \pool' \where$ & $\Idof{i} = \Idof{i_2} \wedge {(\Startof{i} \leq \Startof{i_2} \wedge \Endof{i_2} < \Endof{i})}$ $\vee\ (\Startof{i} < \Startof{i_2} \wedge \Endof{i_2} \leq \Endof{i}) \}\ \cap$\\
  & $\{ i \in \pool' :$ & $\nexists i_3 \in \pool' \where$ & $\Idof{i} = \Idof{i_3} \wedge {\Startof{i} = \Startof{i_3} \wedge \Endof{i} = \Endof{i_3}}$ $\wedge\ \Mapof{i_3} \prec \Mapof{i} \}$
\end{tabular}

\noindent
where $\prec$ is a total order over $\mapty$ used as a tiebreaker when more than one new intervals exist in $\pool'$ with equal identifiers and timestamps.


For the \nfer evaluation problem under minimality we replace $\interp{\_}$ in the semantics with an interpretation function that applies \minimality to the result of $\interp{\_}$.

\begin{lstlisting}
       RuleMin[:_:] : Delta -> Pool -> Pool
       RuleMin[:rule:]pi = minimality(RuleSem[:rule:]pi,pi)
\end{lstlisting}

\begin{example}
Assume an input trace of alternating $B$-labeled and $C$-labeled intervals with empty maps, say $\tau = (B,0,\emptymap),(C,1,\emptymap),(B,3,\emptymap),(C,4,\emptymap)$ and the \nfer rule $\delta = \nferrule{A}{B}{\beforekw}{C}{\true}{\emptymap}$.
These are the same conditions as shown in Figure~\ref{fig:minimality}, where the four input events are shown on the top line.
Note that this example differs from those shown throughout the rest of the paper because we need to include timestamps to show the effect of minimality.
We project $\tau$ to an initial pool (we omit writing the maps as they are not used) $\pi_1 = \{(B,0,0),(C,1,1),(B,3,3),(C,4,4)\}$.

Without minimality, $\interp{\delta}{(\pi_1)} = \pi_2 = \{(A,0,1), (A,0,4), (A,3,4) \}$, since there are three pairings of (the interval projection of) the events in $\tau$ that satisfy the clock predicate \beforekw{}, and the map predicate is always true.
Note that the start and end timestamps of the intervals in $\pi$ are taken from the input events that satisfy $\delta$.

We now examine the component parts of $\minimality(\pi_2,\pi_1)$.
There are three sets constructed in $\minimality$, and an interval must be in all three sets to be in the output.
\begin{enumerate}
  \item $\{ i \in \pi_2 : \nexists i_1 \in \pi_1 \where \Idof{i} = \Idof{i_1} \wedge \Startof{i} \leq \Startof{i_1} \wedge \Endof{i_1} \leq \Endof{i} \}$
  
  This set includes all of $\pi_2$ since no $A$-labeled intervals appear in $\pi_1$.
  
  \item $\{ i \in \pi_2 : \nexists i_2 \in \pi_2 \where \Idof{i} = \Idof{i_2} \wedge {(\Startof{i} \leq \Startof{i_2} \wedge \Endof{i_2} < \Endof{i})}\vee\ (\Startof{i} < \Startof{i_2} \wedge \Endof{i_2} \leq \Endof{i}) \}$
  
  This set includes only $\{(A,0,1), (A,3,4) \}$ because for $(A,0,4)$ the interval $(A,0,1)$ has the same identifier, the same start timestamp, and a strictly lower end timestamp.  It would also be excluded by $(A,3,4)$.
  
  \item $\{ i \in \pi_2 : \nexists i_3 \in \pi_2 \where \Idof{i} = \Idof{i_3} \wedge {\Startof{i} = \Startof{i_3} \wedge \Endof{i} = \Endof{i_3}} \wedge\ \Mapof{i_3} \prec \Mapof{i} \}$
  
  This set includes all of $\pi_2$ since there are no intervals in $\pi_2$ that vary only by their maps.
\end{enumerate}
As such, $\interpmin{\delta}{(\pi_1)} = \{(A,0,1), (A,3,4) \}$.
Thus, the minimality restriction filters out the non-minimal interval~$(A,0,4)$. 
\end{example}

\begin{theorem}
\label{thm:minimality_finite}
The \nfer evaluation problem with finite data and minimality is in \ptime.
\end{theorem}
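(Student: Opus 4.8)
The plan is to exploit the fact that minimality, together with finite data, keeps the fixed point polynomially small, so that it can simply be computed by saturation.

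\textbf{Step 1 (polynomially many intervals).} The decisive observation is that, after applying the selection function, for every identifier~$\eta$ and every pair of timestamps~$s \le e$ at most one interval of the form~$(\eta, s, e, M)$ can be present: the third clause of $\minimality$ keeps only the $\prec$-least map among freshly produced intervals sharing a triple~$(\eta,s,e)$, and its first clause (instantiated with $s_1 = s$, $e_1 = e$) forbids a rule from producing $(\eta, s, e, M)$ once some $(\eta, s, e, M')$ is already in the input pool. I would phrase this as an invariant on all pools arising in the evaluation of $\traceinterpk{D}{\tau}{k}$ and prove it by induction through the definitions of $R_{\selection}$, $S$, and the fixed-point iteration. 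Since \infer (and \nfer) never introduces new timestamps, every start and end value occurring in any interval already occurs in $\tau$; hence the number of reachable intervals is bounded by $|\tau| + \iota \cdot t^2$, where $\iota$ is the number of identifiers and $t \le |\tau|$ the number of timestamps, which is polynomial in $|D|+|\tau|$. Each such interval is also of polynomial size, because its map has at most $\iota$ entries, each in $\set{0,\ldots,k-1}$, hence representable with $\log k$ bits, and $k$ is part of the input.

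\textbf{Step 2 (few iterations).} Because $\minimality(\pool',\pool) \subseteq \pool'$, we have $\interpsel{\delta}{\pi} \subseteq \interp{\delta}{\pi}$, so the function $S$ never deletes intervals: it only adds (a subset of the) newly derivable ones. Thus $S$ is inflationary, and for a specification containing cycles---which, by the language restrictions, uses only inclusive rules---the sequence $\pi_1 = init(\tau)$, $\pi_{i+1} = \seminterp{D}{\pi_i}$ is non-decreasing and, by Step~1, stabilizes after polynomially many steps; the stable pool is $\traceinterpk{D}{\tau}{k}$. For a cycle-free specification (possibly containing exclusive rules) one instead computes a topological sort of $G(D)$ in linear time and applies $S$ once in that order, exactly as prescribed by the definition of $T$.

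\textbf{Step 3 (each iteration is cheap).} One application of $S$ processes the rules in order; for an inclusive rule it ranges over all pairs of intervals currently in the pool---polynomially many by Step~1---each pair yielding at most one candidate, whose timestamps are determined by the clock predicate and whose map is determined by $\Psi$; for an exclusive rule it ranges over single included intervals and checks the polynomial-size universal non-existence condition against the current pool. Evaluating the clock and map predicates and $\Psi$ is polynomial because the map functions use only bounded mod-$k$ arithmetic, and applying $\minimality$ to the polynomially many candidates is polynomial as well. Hence a single application of $S$ runs in polynomial time, and by Step~2 only polynomially many applications are required; finally one checks whether the resulting pool contains an interval labelled~$\target$, giving the \ptime upper bound.

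\textbf{Main obstacle.} The delicate part is the invariant of Step~1: one must carefully track the interplay of the three clauses of $\minimality$ with the union performed inside $S$, and, for the full language, confirm it also holds for exclusive rules, where the produced interval inherits the timestamps of the included interval. A minor point is that $init(\tau)$ may already contain several events with the same identifier and timestamp; these contribute only $O(|\tau|)$ extra intervals and do not break the bound. Verifying that $S$ remains inflationary in spite of minimality's filtering is the other step that must be checked with care, since it is precisely what avoids the order-dependence and the potentially long fixed-point iterations that make the infinite-data and exclusive-rule variants harder.
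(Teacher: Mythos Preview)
Your proposal is correct and follows essentially the same approach as the paper: bound the size of the fixed point by $(\iota\cdot t^2)+|\tau|$ via minimality, observe that each interval fits in polynomially many bits, and compute $\traceinterpk{D}{\tau}{k}$ directly. The paper's proof is a three-line sketch stating exactly these points; your Steps~1--3 spell out the supporting invariants (uniqueness per $(\eta,s,e)$ triple, inflationarity of $S$, and the per-iteration cost) that the paper leaves implicit, but there is no substantive difference in strategy.
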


\begin{proof}
Consider an instance with specification~$D$, trace~$\tau$, and bound~$k$ on the map values.
Due to minimality, the size of $\traceinterpf[k]{D}{\tau}$ is bounded by $(\iota\cdot t^2)+|\tau|$, where $\iota$ is the number of identifiers in $D$ and $\tau$ and $t$ is the number of timestamps in $\tau$. 
Note that this bound is independent of $k$.

Also, map values and timestamps can be represented with polynomially many bits in the size of $D$ and $\tau$.
Hence, we can compute $\traceinterpf[k]{D}{\tau}$ and check whether it contains an interval labeled by the target identifier in polynomial time.
\end{proof}

A similar approach works for infinite data.

\begin{theorem}
\label{thm:minimality_infinite}
The \nfer evaluation problem with infinite data and minimality is in \exptime.
\end{theorem}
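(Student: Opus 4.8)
The plan is to mimic the proof of Theorem~\ref{thm:minimality_finite}: compute the fixed point~$\traceinterp{D}{\tau}$ explicitly and then check whether it contains a $\target$-labeled interval. The only new difficulty in the infinite-data setting is that the map values occurring in the fixed point are no longer bounded a priori, so I additionally have to show that they can be represented with exponentially many bits; once that is done, a direct simulation of the fixed-point computation runs in exponential time.

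First I would isolate the structural consequences of minimality that already underlie the proof of Theorem~\ref{thm:minimality_finite}. Because \nfer never introduces new timestamps, every interval in the fixed point has its start and end among the at most $t$ timestamps of $\tau$; and because of the first and third clauses of $\minimality$, for each identifier and each such pair of timestamps at most one interval survives. Hence the fixed point contains at most $P = \iota\cdot t^2 + |\tau|$ intervals, where $\iota$ is the number of identifiers and $t$ the number of timestamps of $\tau$; crucially $P$ is polynomial in $|D|+|\tau|$ and independent of the data domain. Since the minimality-filtered rule semantics is still inflationary, the fixed-point iteration stabilizes after at most $P$ applications of the one-step semantics (and after a single application in topological order in the cycle-free case, which is the only one allowed once exclusive rules are present). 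Moreover, exactly as in Proposition~\ref{prop:witnesstrees}, every non-initial interval of the fixed point is produced by a rule from one or two intervals of the fixed point, so each interval admits an ancestry tree of height at most $P$ with initial intervals at the leaves.

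Next I would bound the bit-length of the map values. A rule computes the values of a new map by evaluating a map update~$\Psi$ assembled from at most $|D|$ arithmetic operations applied to literals whose binary encodings have length at most $|D|$. Since multiplication adds bit-lengths whereas addition, subtraction, integer division and modulo contribute only lower-order terms, a single evaluation transforms inputs of bit-length at most~$B$ into values of bit-length at most $c\cdot|D|\cdot\max(B,|D|)$ for a constant~$c$ (summing over the at most $|D|$ leaves of the formula tree). Iterating this along an ancestry chain of length at most $P$, starting from the input values whose bit-length is polynomial in $|\tau|$, the bit-length grows to at most $(c\cdot|D|)^{P}\cdot\mathrm{poly}(|\tau|)=2^{\mathrm{poly}(|D|+|\tau|)}$, i.e., every value in the fixed point fits in exponentially many bits.

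With these bounds the algorithm is straightforward: iterate the minimality-filtered one-step semantics at most $P$ times; each iteration applies each of the at most $|D|$ rules once; each rule application ranges over the $O(P^2)$ pairs of intervals currently present, evaluating the clock predicate, the map predicate and the map update, after which the minimality filter is realized by $\mathrm{poly}(P)$ comparisons of intervals. Every atomic step is an arithmetic or comparison operation on numbers of exponential bit-length, hence takes exponential time, and there are only polynomially many of them, so the whole computation --- and therefore the evaluation problem --- lies in \exptime. I expect the bit-length bound of the previous paragraph to be the only real obstacle, and it is precisely here that minimality is indispensable: without it the fixed point can contain unboundedly large values (indeed the problem is then undecidable, Theorem~\ref{thm:full}), whereas the polynomial bound~$P$ on the number of intervals limits the length of ancestry chains, hence the number of compounded multiplications, and thus keeps the bit-length singly exponential.
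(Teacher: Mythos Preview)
Your proposal is correct and follows essentially the same approach as the paper: compute the fixed point explicitly, use minimality to bound the number of intervals (and hence the number of iterations and the depth of ancestry chains) polynomially, deduce that map values are at most doubly-exponential so their binary encodings are singly exponential, and conclude that the whole computation runs in exponential time. Your write-up is in fact more detailed than the paper's, which condenses the bit-length argument into a reference to the bound from Theorem~\ref{thm:inc_free_infinite}.
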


\begin{proof}
Again, we can compute $\traceinterpf{D}{\tau}$ to determine whether it contains an interval labeled by the target identifier. 
Due to minimality, the size of $\traceinterpf{D}{\tau}$ is still bounded by $(\iota\cdot t^2)+|\tau|$.
However, map values can only be bounded doubly-exponentially (cf.~Theorem~\ref{thm:inc_free_infinite}), as one can apply a polynomial $\iota\cdot t^2$ times to the map values of the initial intervals.
Hence, intervals in $\traceinterpf{D}{\tau}$ can be represented with exponentially many bits in the size of the the specification and the trace, which yields an exponential running time of computing $\traceinterpf{D}{\tau}$.
\end{proof}

\section{Discussion and Conclusion}
\label{sec:discussion}
We have studied the complexity of the \nfer evaluation problem.
It is undecidable in the presence of recursion and infinite data, even without exclusive rules. 
In contrast, regardless of the presence of exclusive rules, the evaluation problem is decidable for cycle-free specifications or with respect to finite data.
Most importantly for applications, the problem is in \ptime if we impose the minimality constraint and restrict to finite data. 
While we only allow natural numbers and Booleans as map values, our upper bounds also hold for more complex data types, i.e., signed numbers, (fixed-precision) floating point numbers, and strings, which  were included in the original definitions~\cite{kauffman2016nfer,kauffman2017inferring}.

Most of our complexity bounds are tight, but we leave one gap.
%
The \nfer evaluation problem with infinite data and minimality is in \exptime while no nontrivial lower bounds are known.
The upper bound follows from the fact that the map values may be of doubly-exponential size, i.e., they require exponential time to be computed.
However, minimality is a very restrictive constraint that in particular severely limits the ability to simulate nondeterministic computations. 
Coupled with the fact that minimality implies a polynomial upper bound on the number of intervals in the fixed point, this explains the lack of a nontrivial lower bound.

All our lower bound proofs only use intervals with the same timestamps, i.e., the complexity stems from the manipulation of data instead of temporal reasoning. 
Similarly, the upper bound proofs are mostly concerned with encoding of data and the temporal reasoning is trivial.
One of the reasons is that \nfer rules do not create new timestamps for intervals; newly created intervals can only use timestamps that already appear in the input trace. 
This leaves only a polynomial number of combinations of start points and end points, which is (at least) exponentially smaller than the number of data values.
For this reason, we propose to investigate data-free \nfer to analyze the complexity of the evaluation problem with respect to the choice of temporal operators. 
In this case, there are only polynomially many possible intervals in the fixed point.
So, a trivial upper bound on the complexity is \ptime, but we expect better results for fragments. 

Finally, we are currently studying the \nfer satisfiability problem, i.e., given a specification~$D$ and a target identifier~$\eta$, is there a trace~$\tau$ such that $\traceinterpe{D}{\tau}$ contains an $\eta$-labeled interval?
This has a practical application in supporting \nfer users in checking the specifications for sanity: if no $\eta$-labeled interval can be produced from any input, then the specification probably does not capture the user's intent.

\subsection*{Acknowledgments}
\label{sec:ack}

This research was partly funded by the ERC Advanced Grant LASSO, the Villum Investigator Grant S4OS and DIREC, Digital Research Center Denmark.

\bibliographystyle{splncs04}
\bibliography{logic,sean-cv}

\end{document}